\providecommand{\UsePackage}[1]{%
   \IfFileExists{#1.sty}%
   {\usepackage{#1}}%
   {\usepackage{styles/#1}}%
}
\newcommand{\remove}[1]{}
\numberwithin{figure}{section}%
\numberwithin{table}{section}%
\numberwithin{equation}{section}%
\definecolor{blue25}{rgb}{0,0,0.7}
\newcommand{\emphic}[2]{%
     \textcolor{blue25}{%
         \textbf{\emph{#1}}}%
         \index{#2}}
\newcommand{\emphi}[1]{\emphic{#1}{#1}}
\newcommand{\obslab}[1]{\label{observation:#1}}
\newcommand{\obsref}[1]{Observation~\ref{observation:#1}}
\providecommand{\dslab}[1]{\label{d:s:#1}}
\providecommand{\dsref}[1]{\ensuremath{\mathcal{DS}_{\ref{d:s:#1}}}}
\providecommand{\lemlab}[1]{\label{lemma:#1}}
\providecommand{\lemref}[1]{Lemma~\ref{lemma:#1}}
\providecommand{\corlab}[1]{\label{corollary:#1}}
\providecommand{\corref}[1]{Corollary~\ref{corollary:#1}}
\newtheorem{theorem}{Theorem}[section] 
\newtheorem{lemma}[theorem]{Lemma}
\newtheorem{corollary}[theorem]{Corollary}
{\theorembodyfont{\rm} \newtheorem{definition}[theorem]{Definition}}
{\theorembodyfont{\rm} }
\newtheorem{observation}[theorem]{Observation}
\newtheorem{assumption}[theorem]{Assumption}
\newtheorem{datastructure}[theorem]{Data-structure}
\newenvironment{proof}{\trivlist\item[]\emph{Proof}:}%
                  {\unskip\nobreak\hskip 1em plus 1fil\nobreak%
                           \rule{2mm}{2mm}
                           \parfillskip=0pt%
                           \endtrivlist}
\newcommand{\Term}[1]{\textsf{#1}}
\newcommand{\TermI}[1]{\Term{#1}\index{#1@\Term{#1}}}
\newcommand{\BFS}{\textsf{BFS}\ }
\newcommand{\thmlab}[1]{{\label{theo:#1}}}
\newcommand{\thmref}[1]{Theorem~\ref{theo:#1}}
\newcommand{\eqlab}[1]{\label{equation:#1}}
\newcommand{\Eqrefpage}[1]{Eq.~(\ref{equation:#1})%
   $_\text{p\pageref{equation:#1}}$}
\newcommand{\seclab}[1]{{\label{section:#1}}}
\newcommand{\secref}[1]{Section~\ref{section:#1}}
\newcommand{\deflab}[1]{\label{defn:#1}}
\newcommand{\defref}[1]{Defnition~\ref{defn:#1}}
\newcommand{\assumplab}[1]{\label{assumption:#1}}
\newcommand{\assumpref}[1]{Assumption~\ref{assumption:#1}}
\newcommand{\itemlab}[1]{\label{item:#1}}
\newcommand{\itemref}[1]{(\ref{item:#1})}
\newcommand{\MakeBig}{\rule[-.2cm]{0cm}{0.4cm}}
\newcommand{\MakeSBig}{\rule[0.0cm]{0.0cm}{0.35cm}} 
\newcommand{\brc}[1]{\left\{ {#1} \right\}}
\newcommand{\sep}[1]{\,\left|\, {#1} \MakeBig\right.}
\newcommand{\pth}[2][\!]{#1\left({#2}\right)}
\newcommand{\pbrc}[2][\!\!]{#1\left[ {#2} \MakeBig \right]}
\newcommand{\norm}[1]{\left\lVert {#1} \right \rVert}
\newcommand{\distPk}[3]{\mathsf{d}_{#3}\pth{#2,#1}}
\newcommand{\abs}[1]{\left | {#1} \right |}
\newcommand{\floor}[1]{\left\lfloor {#1} \right\rfloor}
\newcommand{\cardin}[1]{\left\lvert {#1} \right\rvert}
\newcommand{\ceiling}[1]{\lceil #1 \rceil}
\newcommand{\eps}{{\varepsilon}}%
\newcommand{\WSPD}{\TermI{WSPD}\xspace}
\newcommand{\NN}{\TermI{NN}\xspace}
\newcommand{\ANN}{\TermI{ANN}\xspace}
\newcommand{\AVD}{\TermI{AVD}\xspace}
\newcommand{\etal}{\textit{et~al.}\xspace}
\renewcommand{\Re}{{\rm I\!\hspace{-0.025em} R}}
\newcommand{\diameter}[1]{\mathsf{diam}\pth{ {#1} }}
\newcommand{\PntSet}{\mathsf{P}}
\newcommand{\PntSetQ}{\mathsf{Q}}
\newcommand{\query}{\mathtt{q}}
\newcommand{\pnt} {\mathsf{p}}
\newcommand{\pntA}{\mathsf{u}}
\newcommand{\pntB}{\mathsf{v}}
\providecommand{\si}[1]{#1}
\providecommand{\ds}{\displaystyle}
\newcommand{\ball}[2]{\mathsf{b}(#1,#2)}
\newcommand{\setA}{X}
\newcommand{\setB}{Y}
\newcommand{\ctrA}{\mathsf{c}}
\newcommand{\ctrC}{\mathsf{w}}
\newcommand{\ctrD}{\mathsf{u}}
\newcommand{\ballA}{b}
\newcommand{\ballB}{b'}
\newcommand{\ballC}{\Lambda}
\newcommand{\ballD}{\mathsf{o}}
\newcommand{\ballQ}{\ballA_{\query}}
\newcommand{\radA}{\mathsf{r}}
\newcommand{\radC}{\mathsf{x}}
\newcommand{\radD}{\psi}%
\newcommand{\CenterSetA}{\mathcal{C}}
\newcommand{\dist}[2]{\mathsf{d}\pth{#1,#2}}
\newcommand{\distE}[2]{\left\|{#1 - #2}\right\|}
\newcommand{\NodeSetA}{X_N}
\newcommand{\BallSetA}{\mathcal{B}}
\newcommand{\BallSetX}{\mathcal{R}}
\newcommand{\GridACinner}{\Grid_\approx}
\newcommand{\GridAC}{\GridACinner}
\newcommand{\GridApproxX}[2]{\GridACinner\pth{#1, #2}}
\newcommand{\GridSetB}[1]{\GridACinner\pth{#1}}
\newcommand{\IntBallsX}[2]{{#1}_\succeq\pth{#2}}
\newcommand{\Env}{G}
\newcommand{\constA}{\zeta_1}
\newcommand{\cellA}{\mathsf{\Box}}
\newcommand{\QTree}{\mathcal{T}}
\newcommand{\QTreeCen}{\QTree_\CenterSetA}
\newcommand{\node}{\nu}
\newcommand{\nodeA}{u}
\newcommand{\DS}{\mathcal{D}}
\newcommand{\num}{x}
\newcommand{\numA}{\gamma}
\newcommand{\numB}{\zeta}
\newcommand{\kdist}{d_k}
\newcommand{\BSet}[2]{\capSubset{\BallSetA}{\ball{#1}{#2}}}
\newcommand{\BSetNum}[2]{\cardin{\BSet{#1}{#2}}}
\newcommand{\algBSet}{\Algorithm{range{}Count}\ }
\newcommand{\algBSetX}[3]{\algBSet{}%
   {\ensuremath{\pth[]{#1,#3, #2}}}}
\newcommand{\NxNum}{N}
\newcommand{\NxNumX}[1]{\NxNum\pth{#1}}
\newcommand{\CxNum}{C}
\newcommand{\cval}{x}
\newcommand{\CellSetA}{\mathcal{I}}
\renewcommand{\th}{\si{th}\xspace}
\newcommand{\bcountI}{\#}
\newcommand{\bcountX}[1]{\bcountI\pth{#1}}
\newcommand{\Grid}{\mathsf{G}\index{grid}}
\newcommand{\diamX}[2][\!]{\mathrm{diam}\pth[#1]{#2}}
\newcommand{\BallsAssocX}[1]{\BallSetA_{#1}}
\newcommand{\ApproxX}[1]{\widetilde{#1}}
\newcommand{\capSubset}[2]{{#1}\pth{#2}}
\newcommand{\constC}{\xi}
\newcommand{\nclust}{m}
\newcommand{\mapped}[1]{#1'}
\newcommand{\Scube}{U}
\newcommand{\repX}[1]{\mathrm{nn'}\pth{#1}}
\newcommand{\CellSetAVD}{\mathcal{W}}
\newcommand{\CellSetC}{\mathcal{S}}
\newcommand{\repres}[1]{#1_{\mathsf{rep}}}
\newcommand{\ballrepX}[1]{\mathbf{b}\pth{#1}}
\newcommand{\adknn}[1]{\mathrm{\beta}_k\pth{#1}}
\newcommand{\kbannrepX}[1]{\mathrm{nn}_k\pth{#1}}
\newcommand{\normP}[1]{\norm{#1}_{\oplus}}
\newcommand{\cDim}{\mathsf{c}_d}
\newcommand{\QClustBalls}{\Sigma}
\newcommand{\estX}{\lambda^*}
\newcommand{\minDistPk}[2]{r_{\mathrm{opt}}\pth{{#1},{#2}}}
\newcommand{\minBCoverk}[2]{R_{\mathrm{opt}}\pth{{#1},{#2}}}
\newcommand{\SarielThanks}[1]{\thanks{Department of Computer
      Science; 
      University of Illinois; 
      201 N. Goodwin Avenue;
      Urbana, IL, 61801, USA;
      {\tt sariel\atgen{}uiuc.edu}; {\tt
         \url{http://www.uiuc.edu/\string~sariel/}.} #1}}
\newcommand{\NirmanThanks}[1]{\thanks{Department of Computer
      Science; 
      University of Illinois; 
      201 N. Goodwin Avenue;
      Urbana, IL, 61801, USA;
      {\tt \si{nkumar5}\atgen{}illinois.edu}; {\tt
         \url{http://www.cs.uiuc.edu/\string~\si{nkumar5}/}.} #1}}
\newcommand{\atgen}{\symbol{'100}}
\title{Robust Proximity Search for Balls using Sublinear Space%
       \footnote{Work on this paper was partially support by NSF AF awards
       CCF-0915984 and CCF-1217462}}
\author{Sariel Har-Peled%
   \SarielThanks{}%
   \and%
   Nirman Kumar%
   \NirmanThanks{}%
}
\begin{document}

\maketitle

\begin{abstract}
    Given a set of $n$ disjoint balls $\ballA_1, \dots, \ballA_n$ in
    $\Re^d$, we provide a data structure, of near linear size, that
    can answer $(1\pm\eps)$-approximate $k$\th-nearest neighbor
    queries in $O(\log n + 1/\eps^d)$ time, where $k$ and $\eps$ are
    provided at query time. If $k$ and $\eps$ are provided in advance,
    we provide a data structure to answer such queries, that requires
    (roughly) $O(n/k)$ space; that is, the data structure has
    sublinear space requirement if $k$ is sufficiently large.
\end{abstract}

\section{Introduction}
The \emph{nearest neighbor} problem is a fundamental problem in
Computer Science \cite{sdi-nnmlv-06, ai-nohaa-08}. 
Here, one is given a set of points $\PntSet$, and
given a query point $\query$ one needs to output the nearest point in
$\PntSet$ to $\query$. There is a trivial $O(n)$ algorithm for
this problem. Typically the set of data points is fixed, while
different queries keep arriving. Thus, one can use preprocessing to
facilitate a faster query. There are several applications of
nearest neighbor search in computer science including 
pattern recognition, information retrieval, vector
compression, computational statistics, clustering, data mining
and learning among many others, see for instance 
the survey by Clarkson \cite{c-nnsms-06} for references.
If one is
interested in guaranteed performance and near linear space, there is
no known way to solve this problem efficiently (i.e., logarithmic
query time) for dimension $d > 2$, while using near linear space for
the data structure.


In light of the above, major effort has been devoted to develop
approximation algorithms for nearest neighbor search
\cite{amnsw-oaann-98, im-anntr-98, 
   \si{c-nnsms-06}, \si{him-anntr-12}}. 
In
the \emphi{$(1+\eps)$-approximate nearest neighbor} problem, one is
additionally given an approximation parameter $\eps > 0$ and one is
required to find a point $\pntA \in \PntSet$ such that
$\dist{\query}{\pntA} \leq (1+\eps) \dist{\query}{\PntSet}$.  In $d$
dimensional Euclidean space, one can answer \ANN queries in $O(\log n
+ 1/\eps^{d-1})$ time using linear space \cite{amnsw-oaann-98,
   h-gaa-11}. Unfortunately, the constant hidden in the $O$ notation
is exponential in the dimension (and this is true for all bounds
mentioned in this paper), and specifically because of the
$1/\eps^{d-1}$ in the query time, this approach is only efficient in
low dimensions. Interestingly, for this data structure, the
approximation parameter $\eps$ need not be specified during the
construction, and one can provide it during the query. An alternative
approach is to use Approximate Voronoi Diagrams (\AVD), introduced by
Har-Peled \cite{h-rvdnl-01}, which is a partition of space into
regions of low total complexity, with a representative point for each
region, that is an \ANN for any point in the region. In particular,
Har-Peled showed that there is such a decomposition of size $O\pth{(n
   /\eps^d)\log^2 n}$, see also \cite{him-anntr-12}.  This allows \ANN
queries to be answered in $O( \log n)$ time.  Arya and Malamatos
\cite{am-lsavd-02} showed how to build \AVD{}s of linear complexity
(i.e., $O(n/\eps^d)$). Their construction uses \WSPD (Well Separated
Pairs Decomposition) \cite{ck-dmpsa-95}. Further trade-offs between
query time and space usage for \AVD{}s were studied by Arya \etal
\cite{amm-sttan-09}.

A more general problem is the $k$-nearest neighbors problem where one
is interested in finding the $k$ points in $\PntSet$ nearest to the
query point $\query$. This is widely used in classification, where the
majority label is used to label the query point. A restricted version
is to find only the $k$\th-nearest neighbor. This problem and its
approximate version have been considered in \cite{amm-sttas-05,
   hk-drhrp-12}. 

Recently, the authors \cite{hk-drhrp-12} showed that one can compute
a $(k,\eps)$-\AVD that $(1+\eps)$-approximates the distance to the
$k$\th nearest neighbor, and surprisingly, requires $O(n/k)$
space; that is, sublinear space if $k$ is sufficiently large. For
example, for the case $k = \Omega( \sqrt{n})$, which is of interest in
practice, the space required is only $O\pth{\sqrt{n}}$.  
Such
\ANN is of interest when one is worried that there is noise in the
data, and thus one is interested in the distance to the $k$\th \NN
which is more robust and noise resistant. Alternatively,
one can think about such data structures as enabling one to summarize
the data in a way that still facilitates meaningful proximity queries.

In this paper we consider a generalization of the $k$\th-nearest
neighbor problem. Here, we are given a set of $n$ disjoint balls in
$\Re^d$ and we want to preprocess them, so that given a query point we
can find approximately the $k$\th closest ball. The distance of a
query point to a ball is defined as the distance to its boundary if
the point is outside the ball or $0$ otherwise. Clearly, this problem
is a generalization of the $k$\th-nearest neighbor problem by viewing
points as balls of radius $0$.  Algorithms for the $k$\th-nearest
neighbor for points, do not extend in a straightforward manner to this
problem because the distance function is no longer a metric. Indeed,
there can be two very far off points both very close to a single ball,
and thus the triangle inequality does not hold.  The problem of
finding the closest ball can also be modeled as a problem of
approximating the minimization diagram of a set of functions; here, a
function would correspond to the distance from one of the given
balls. There has been some recent work by the authors on this topic,
see \cite{hk-amdgp-13}, where a fairly general class of functions
admits a near-linear sized data structure permitting a logarithmic
time query for the problem of approximating the minimization
diagram. However, the problem that we consider in this paper does not
fall under the framework of \cite{hk-amdgp-13}. The
technical assumptions of \cite{hk-amdgp-13} mandate that the set of points
which form the $0$-sublevel set of a distance function, i.e., the set
of points at which the distance function is $0$ is a single point (or 
an empty set). This is not the case for the problem we consider here. 
Also, we are interested in the more general
$k$\th-nearest neighbor problem, while \cite{hk-amdgp-13} only considers
the nearest-neighbor problem, i.e., $k = 1$.

We first show how to preprocess the set of balls into a
data structure requiring space $O(n)$, in $O(n \log n)$ time, so that
given a query point $\query$, a number $1 \leq k \leq n$ and $\eps >
0$, one can compute a $(1 \pm \eps)$-approximate $k$\th closest ball
in time $O(\log n + \eps^{-d})$. If both $k$ and $\eps$ are available
during preprocessing, one can preprocess the balls into a
$(k,\eps)$-\AVD, using $O(\frac{n}{k \eps^d}\log(1/\eps))$ space, so
that given a query point $\query$, a $(k,\eps)$-\ANN closest ball can
be computed, in $O(\log (n/k) + \log (1/\eps))$ time.

\remove{
Our plan of attack, is to try and extend our previous work
\cite{hk-drhrp-12} to the new more general setup. Since we are dealing
with balls instead of points, the task is more challenging, and we do
it in stages:
\begin{enumerate}[\quad(A)]
    \item \textbf{Approximate range counting on balls.}  %
    Given a set of disjoint balls, and a query ball, we want to count
    the number of input balls that intersect it, while allowing an
    approximation only for the query ball. This is somewhat more challenging
    than approximate range counting, as done by Arya and Mount
    \cite{am-ars-00}, as some of the balls intersecting the query ball
    might be significantly larger. The this end, we build a
    data structure that enable us to quickly count exactly the large input
    balls that intersect the query ball.  This is described in
    \secref{apprx:rcount}.

    \item \textbf{Linear space $(k,\eps)$-\ANN on balls.}  %
    Given a query point, we compute its distance to the $i$\th nearest
    center, for $i =k-\cDim, \ldots, k$, where $\cDim$ is some
    constant that depends on the dimension. Next, we argue that either
    one of these distances is the required approximate distance (and
    this can be verified using the approximate range counting
    data structure from above), or alternatively, the distance is
    determined by ``huge'' balls that have radius significantly larger
    than the desired distance. As such, we extract the at most $\cDim$
    large balls that might be relevant, add their distance to the
    query point to the set of candidate distance, and search these
    distances. This yields a constant approximation to the $k$\th \ANN,
    and converting it to $(1+\eps)$-approximation is easy using our
    tools.  This is described in \secref{k:at:query}.

    \item \textbf{Quorum clustering for balls.}
    Somewhat oversimplifying things, the basic strategy in the
    previous work, was to find the point achieving the global 
    minimum of the $k$\th \ANN distance function, approximate 
    the function correctly in a region around
    this point, remove the points that define the minimum from the
    data-set and repeat. This was facilitated by finding the smallest
    ball that contains $k$ input points. For balls, it is not clear
    how to find the smallest ball that intersects $k$ balls, remove
    these balls, and repeat this process, and do it
    efficiently. Furthermore, it is no longer true that one can remove
    these $k$ balls, as some of them might be huge. Instead,
    conceptually, we remove only the small balls (the exact details of
    what we do are more involved, and require significantly more care)
    from these $k$ balls. Furthermore, instead of using the smallest
    ball intersecting $k$ balls, we use the smallest ball containing
    $k-\cDim$ centers of the balls, and expand it so that it
    intersects $k$ balls. We then repeat this mining process till all
    centers are excavated. Surprisingly, since the quorum clustering
    is done on the centers and not on the balls, we are able to
    implement this process efficiently, and furthermore, we can argue
    that it yields a meaningful quorum clustering for the input
    balls. This is described in \secref{quorum}.

    \item \textbf{Sublinear space $(k,\eps)$-\AVD on balls.}  %
    Now, equipped with the new quorum clustering of the balls, we can
    build a $(k,\eps)$-\AVD for the balls. Surprisingly, the
    construction now follows \cite{hk-drhrp-12} in a straightforward
    fashion. The resulting data structure uses $O(n/k)$ space,
    and answer $(k,\eps)$-\ANN queries in $O( \log n)$ time.
\end{enumerate}
}
\section*{Paper Organization}
In \secref{prelims}, we define the problem, list some assumptions, and
introduce notations.  In \secref{apprx:rcount}, we set up some basic
data structures to answer approximate range counting queries for
balls. In \secref{k:at:query}, we present the data structure, query
algorithm and proof of correctness for our data structure which can
compute $(1 \pm \eps)$-approximate $k$\th-nearest neighbors of a query
point when $k, \eps$ are only provided during query time. In \secref{quorum}
we present approximate quorum clustering, see \cite{cdhks-gqsa-05,hk-drhrp-12}, 
for a set of disjoint balls. Using
this, in
\secref{avd}, we present the $(k,\eps)$-\AVD construction. We conclude
in \secref{conclusions}.

\section{Problem definition and notation}
\seclab{prelims}

We are given a set of disjoint\footnote{Our data structure and
   algorithm work for the more general case where the balls are
   interior disjoint, where we define the interior of a ``point
   ball'', i.e., a ball of radius $0$, as the point itself. This is not
   the usual topological definition.} balls $\BallSetA = \brc{
   \ballA_1, \dots, \ballA_n }$, where $\ballA_i =
\ball{\ctrA_i}{\radA_i}$, for $i=1, \ldots, n$.  Here
$\ball{\ctrA}{\radA} \subseteq \Re^d$ denotes the (closed) ball with
center $\ctrA$ and radius $\radA \geq 0$.  Additionally, we are given
an approximation parameter $\eps \in (0,1)$.  For a point $\query \in
\Re^d$, the \emphi{distance} of $\query$ to a ball $\ballA =
\ball{\ctrA}{\radA}$ is
\begin{math}
    \dist{\query}{\ballA}%
    =%
    \max \pth{\MakeSBig \norm{\query - \ctrA} - \radA, \, 0 }.
\end{math}

\begin{observation}
    \obslab{ball:contained}%
    For two balls $\ballA_1 \subseteq \ballA_2 \subseteq \Re^d$, and
    any point $\query \in \Re^d$, we have $\dist{\query}{\ballA_1}
    \geq \dist{\query}{\ballA_2}$.
\end{observation}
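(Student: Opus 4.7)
The plan is to split into two cases based on whether the query point $\query$ lies inside $\ballA_2$ or not. If $\query \in \ballA_2$, then $\dist{\query}{\ballA_2} = 0$ by definition, and since distances are non-negative the inequality $\dist{\query}{\ballA_1} \geq 0 = \dist{\query}{\ballA_2}$ is immediate. So the substantive case is when $\query \notin \ballA_2$, which also forces $\query \notin \ballA_1$ (because $\ballA_1 \subseteq \ballA_2$), so both distances equal the non-truncated expressions $\norm{\query - \ctrA_i} - \radA_i$.

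Next, I would translate the containment $\ballA_1 \subseteq \ballA_2$ into the quantitative inequality $\norm{\ctrA_1 - \ctrA_2} + \radA_1 \leq \radA_2$, i.e.\ $\norm{\ctrA_1 - \ctrA_2} \leq \radA_2 - \radA_1$. This is the standard characterization of one closed Euclidean ball being contained in another, obtained by taking the point on the boundary of $\ballA_1$ diametrically opposite to $\ctrA_2$ and requiring it to lie in $\ballA_2$.

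Finally, I would combine this with the triangle inequality $\norm{\query - \ctrA_1} \geq \norm{\query - \ctrA_2} - \norm{\ctrA_2 - \ctrA_1}$ to obtain
\begin{math}
    \dist{\query}{\ballA_1}
    = \norm{\query - \ctrA_1} - \radA_1
    \geq \norm{\query - \ctrA_2} - \norm{\ctrA_2 - \ctrA_1} - \radA_1
    \geq \norm{\query - \ctrA_2} - \radA_2
    = \dist{\query}{\ballA_2},
\end{math}
which is what we want. There is no real obstacle here; the only tiny subtlety is being careful that the $\max$ with $0$ in the definition of $\dist{\cdot}{\cdot}$ is handled by the initial case split, so that when we do arithmetic with $\norm{\query - \ctrA_i} - \radA_i$ we know the quantities are non-negative and no further truncation is lost.
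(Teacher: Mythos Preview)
Your argument is correct. The paper states this as an observation without proof, so there is nothing to compare against; your case split together with the characterization $\norm{\ctrA_1 - \ctrA_2} + \radA_1 \leq \radA_2$ of ball containment and the triangle inequality is a clean and complete verification.
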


The \emphi{$k$\th-nearest neighbor distance} of $\query$ to
$\BallSetA$, denoted by $\distPk{\BallSetA}{\query}{k}$, is the $k$\th
smallest number in $\dist{\query}{\ballA_1}, \dots,
\dist{\query}{\ballA_n}$. Similarly, for a given set of points
$\PntSet$, $\distPk{\PntSet}{\query}{k}$ denotes the $k$\th-nearest
neighbor distance of $\query$ to $\PntSet$.

We aim to build a data structure to answer $(1 \pm \eps)$-approximate
$k$\th-nearest neighbor (i.e., \emphi{$(k,\eps)$-\ANN}) queries, where
for any query point $\query \in \Re^d$ one needs to output a ball
$\ballA \in \BallSetA$ such that, $(1-\eps)
\distPk{\BallSetA}{\query}{k} \leq \dist{\query}{\ballA} \leq (1+\eps)
\distPk{\BallSetA}{\query}{k}$.  There are different variants
depending on whether $\eps$ and $k$ are provided with the query or in
advance.

We use \emphi{cube} to denote a set of the form
$[a_1, a_1 + \ell] \times [a_2, a_2 + \ell] \times \ldots \times [a_d, a_d + \ell] \subseteq \Re^d$, 
where $a_1, \ldots, a_d \in \Re$ and 
$\ell \geq 0$ is the side length of the cube.

\begin{observation}
    \obslab{lipschitz}
    For any set of balls $\BallSetA$, the function
    $\distPk{\BallSetA}{\query}{k}$ is a $1$-Lipschitz function; that
    is, for any two points $\pntA, \pntB$, we have that
    $\distPk{\BallSetA}{\pntA}{k} \leq \distPk{\BallSetA}{\pntB}{k} +
    \distE{\pntA}{\pntB}$.
\end{observation}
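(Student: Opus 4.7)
The plan is to reduce the $k$\th-nearest-neighbor statement to a single-ball statement, and then argue by a counting (or rank-comparison) argument.

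First I would show that for each fixed ball $\ballA = \ball{\ctrA}{\radA}$, the function $\query \mapsto \dist{\query}{\ballA}$ is $1$-Lipschitz in $\query$. This is just the triangle inequality composed with truncation: from $\norm{\pntA-\ctrA} \leq \norm{\pntB-\ctrA} + \distE{\pntA}{\pntB}$ one gets $\norm{\pntA-\ctrA}-\radA \leq (\norm{\pntB-\ctrA}-\radA) + \distE{\pntA}{\pntB}$, and since $t \mapsto \max(t,0)$ is itself $1$-Lipschitz, the same inequality survives after taking the positive part on both sides. So $\dist{\pntA}{\ballA} \leq \dist{\pntB}{\ballA} + \distE{\pntA}{\pntB}$ for every ball individually.

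Second I would lift this to the $k$\th order statistic. Sort the balls of $\BallSetA$ by their distance from $\pntB$, and let $\ballA_{(1)},\dots,\ballA_{(n)}$ be the resulting order, so that $\distPk{\BallSetA}{\pntB}{k} = \dist{\pntB}{\ballA_{(k)}}$. For every $i \leq k$, the per-ball bound gives
\[
  \dist{\pntA}{\ballA_{(i)}}
  \;\leq\; \dist{\pntB}{\ballA_{(i)}} + \distE{\pntA}{\pntB}
  \;\leq\; \dist{\pntB}{\ballA_{(k)}} + \distE{\pntA}{\pntB}
  \;=\; \distPk{\BallSetA}{\pntB}{k} + \distE{\pntA}{\pntB}.
\]
Thus at least $k$ balls of $\BallSetA$ lie within distance $\distPk{\BallSetA}{\pntB}{k} + \distE{\pntA}{\pntB}$ from $\pntA$, which by definition forces $\distPk{\BallSetA}{\pntA}{k} \leq \distPk{\BallSetA}{\pntB}{k} + \distE{\pntA}{\pntB}$, as desired.

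There is no real obstacle; the only thing to be careful about is that, as the introduction stresses, the ball-distance is not a metric on $\BallSetA$ itself, so one should not try to invoke a triangle inequality between balls. The Lipschitz property is purely in the query argument, and the argument above uses only the (Euclidean) triangle inequality on $\Re^d$ together with the monotonicity and $1$-Lipschitzness of $\max(\cdot,0)$, followed by a standard rank-comparison to pass from per-ball distances to the $k$\th smallest.
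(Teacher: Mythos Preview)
Your argument is correct; the paper states this as an observation without proof, and your two-step approach (per-ball $1$-Lipschitzness via the triangle inequality plus the $1$-Lipschitzness of $t\mapsto\max(t,0)$, followed by a rank-comparison to pass to the $k$\th order statistic) is exactly the standard justification one would supply. There is nothing to compare against.
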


\begin{assumption}
    \assumplab{all:in:cube}%
    We assume all the balls are contained inside the cube
    $\pbrc[]{1/2-\delta, 1/2+\delta}^d$, which can be ensured by
    translation and scaling (which preserves order of distances),
    where $\delta = \eps/4$. As such, we can ignore queries outside
    the unit cube $[0,1]^d$, as any input ball is a valid answer in
    this case.
\end{assumption}

For a real positive number $\num$ and a point $\pnt = (\pnt_1, \ldots,
\pnt_d) \in \Re^d$, define $\Grid_\num(\pnt)$ to be the grid point
$\pth[]{\floor{\pnt_1/\num} \num, \ldots, \floor{\pnt_d/\num}
   \num}$. The number $\num$ is the \emphi{width} or
\emphi{side length} of the \emphi{grid} $\Grid_\num$. The mapping
$\Grid_\num$ partitions $\Re^d$ into cubes that are called grid
\emphic{cells}{cell}.

\begin{definition}
    \deflab{canonical:grid}%
    A cube is a \emphi{canonical cube} if it is contained inside the
    unit cube $\Scube=[0,1]^d$, it is a cell in a grid $\Grid_r$, and
    $r$ is a power of two (i.e., it might correspond to a node in a
    quadtree having $[0,1]^d$ as its root cell).  We will refer to
    such a grid $\Grid_r$ as a \emphic{canonical
       grid}{canonical!grid}. Note that all the cells corresponding
    to nodes of a compressed quadtree are canonical.
\end{definition}

\begin{definition}%
    \deflab{grid:approx}%
    Given a set $\ballA \subseteq \Re^d$, and a parameter $\delta >
    0$, let $\GridApproxX{\ballA}{\delta}$ denote the set of canonical
    grid cells of side length $2^{\floor{\log_2 {\delta
             \diamX[]{\ballA}/\sqrt{d}}}}$, that intersect $\ballA$,
    where $\diamX{\ballA} = \max_{\pnt,\pntA \in \ballA}
    \distE{\pnt}{\pntA}$ denotes the \emphi{diameter} of $\ballA$. Clearly,
    the diameter of any grid cell of $\GridApproxX{\ballA}{\delta}$,
    is at most $\delta \diamX[]{\ballA}$. Let
    $\GridSetB{\ballA} = \GridApproxX{\ballA}{1}$. It is easy to
    verify that $\cardin{\GridSetB{\ballA}} = O(1)$.  The set
    $\GridSetB{\ballA}$ is the \emphi{grid approximation} to
    $\ballA$.
\end{definition}

Let $\BallSetA$ be a family of balls in $\Re^d$. Given a set $\setA
\subseteq \Re^d$, let
\begin{align*}
    \capSubset{\BallSetA}{\setA} = \brc{ \ballA \in \BallSetA \sep{
          \ballA \cap \setA \neq \emptyset}}
\end{align*}
denote the set of all balls in $\BallSetA$ that intersect $\setA$.

For two compact sets $\setA, \setB \subseteq \Re^d$, $\setA \preceq
\setB$ if and only if $\diamX{\setA} \leq \diamX{\setB}$. For a set
$\setA$ and a set of balls $\BallSetA$, let
$\IntBallsX{\BallSetA}{\setA} = \brc{ \ballA \in \BallSetA \sep{
      \ballA \cap \setA \neq \emptyset \text{ and } \ballA \succeq
      \setA }}$.  Let $\cDim$ denote the maximum number of pairwise
disjoint balls of radius at least $\radA$, that may intersect a given
ball of radius $\radA$ in $\Re^d$. Clearly, we have
$\cardin{\IntBallsX{\BallSetA}{\ballA}} \leq \cDim$ for any ball
$\ballA$. We have the following bounds,
\begin{lemma}
  \lemlab{cdim:bound}
  $2 \leq \cDim \leq 3^d$ for all $d$.
\end{lemma}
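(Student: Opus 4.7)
The plan is to handle the two bounds separately. The lower bound is elementary; the upper bound is a volume-packing argument that hinges on a small shrinking trick to pull each of the disjoint large balls down to a common bounded region around the center of the given ball.

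For $\cDim \geq 2$ I would exhibit a concrete configuration. Take $\ballA = \ball{\mathbf{0}}{\radA}$ together with the two balls $\ball{(2\radA, 0, \ldots, 0)}{\radA}$ and $\ball{(-2\radA, 0, \ldots, 0)}{\radA}$. Each of the latter two is externally tangent to $\ballA$ at a single point, and they are mutually disjoint since their centers lie at distance $4\radA$. This certifies $\cDim \geq 2$ in every dimension $d \geq 1$.

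For $\cDim \leq 3^d$, let $\ballB_1, \ldots, \ballB_m$ be pairwise disjoint balls with $\ballB_i = \ball{\ctrC_i}{\radC_i}$ and $\radC_i \geq \radA$, all intersecting $\ballA = \ball{\ctrA}{\radA}$. The natural approach is to compare volumes, but a direct attempt fails: since the $\radC_i$ may be arbitrarily large, neither the centers $\ctrC_i$ nor the balls $\ballB_i$ need lie in any bounded neighborhood of $\ctrA$, and the total volume of the $\ballB_i$ is not controlled. This is the main obstacle.

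The key trick I would use is to replace each $\ballB_i$ by a radius-$\radA$ sub-ball $\ball{\pntA_i}{\radA}$ satisfying both $\ball{\pntA_i}{\radA} \subseteq \ballB_i$ and $\ball{\pntA_i}{\radA} \subseteq \ball{\ctrA}{3\radA}$. To construct $\pntA_i$, pick any $\pnt_i \in \ballB_i \cap \ballA$; if $\norm{\pnt_i - \ctrC_i} \geq \radA$, slide from $\pnt_i$ toward $\ctrC_i$ by distance exactly $\radA$, and otherwise set $\pntA_i = \ctrC_i$. A short triangle-inequality check in each case, using $\radC_i \geq \radA$ and $\pnt_i \in \ballA$, yields $\ball{\pntA_i}{\radA} \subseteq \ballB_i$ and $\norm{\pntA_i - \ctrA} \leq 2\radA$. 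The resulting sub-balls are pairwise disjoint (inheriting disjointness from the $\ballB_i$), all of radius $\radA$, and all contained in $\ball{\ctrA}{3\radA}$; a standard Lebesgue volume comparison then yields $m \cdot \radA^d \leq (3\radA)^d$, hence $m \leq 3^d$.
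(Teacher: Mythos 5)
Your proof is correct and follows essentially the same strategy as the paper's: two externally tangent balls for the lower bound, and for the upper bound a replacement of each intersecting ball by a radius-$\radA$ sub-ball contained in $\ball{\ctrA}{3\radA}$, followed by a volume-packing count. Your construction of the sub-ball (sliding from an intersection point toward the center) is a minor variant of the paper's (an internally tangent ball at a boundary point lying in $\ballA$), and it has the small advantage of handling uniformly the case where some $\ballB_i$ contains $\ballA$ entirely, which the paper treats separately.
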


\begin{proof}
  Let $\ballA = \ball{\ctrA}{\radA}$ be a given ball of radius $\radA$.
  For the lower bound we can take two balls both of radius $\radA$ which
  touch $\ballA$ at diametrically opposite points and lie outside
  $\ballA$. We now show the upper bound. Let $\BallSetA$ be a set
  of disjoint
  balls, each having radius at least $\radA$ and touching $\ballA$.
  Consider a ball $\ballA' \in \BallSetA$. If no point of the boundary
  of $\ballA'$ touches $\ballA$, then clearly $\ballA'$ contains
  $\ballA$ in its interior and it is easy to see that 
  $\cardin{\BallSetA} = 1$. As such we assume that all balls in
  $\BallSetA$ have some point of their boundary inside $\ballA$.
  Take any point $\pnt$ 
  of the boundary of $\ballA'$ such that $\pnt$ is in $\ballA$, 
  and consider a ball of radius $\radA$ that lies completely inside
  $\ballA'$, is of radius $\radA$ and is tangent to $\ballA'$ at
  $\pnt$. We can find such a ball for each ball in $\BallSetA$.
  Moreover, these balls are all disjoint.
  Thus we have $\cardin{\BallSetA}$ disjoint balls of radius exactly
  $\radA$ that touch $\ballA$. It is easy to see that all such balls
  are completely inside $\ball{\ctrA}{3\radA}$. By a simple volume
  packing bound it follows that $\cardin{\BallSetA} \leq 3^d$.
\end{proof}

\begin{definition}%
    \deflab{d:monotonic}%
    For a parameter $\delta \geq 0$, a function $f: \Re^+ \to \Re^+$
    is \emphi{ $\delta$-monotonic}, if for every $x \geq 0$,
    $f(x/(1+\delta)) \leq f(x)$.
\end{definition}

\section{Approximate range counting for balls}
\seclab{apprx:rcount}

\begin{datastructure}
    \dslab{everything}%
    For a given set of disjoint balls $\BallSetA = \brc{\ballA_1,\ldots,
       \ballA_n}$ in $\Re^d$, we build the following data structure,
    that is useful in performing several of the tasks at hand.
    \begin{compactenum}[\;(A)]
        \item \textbf{Store balls in a (compressed) quadtree}. %
        For $i = 1, 2, \dots, n$, let $\Env_i = \GridSetB{\ballA_i}$,
        and let $\Env = \bigcup_{i=1}^n \Env_i$ denote the union of
        these cells.  Let $\QTree$ be a compressed quadtree
        decomposition of $[0,1]^d$, such that all the cells of $\Env$
        are cells of $\QTree$.  We preprocess $\QTree$ to answer point
        location queries for the cells of $\Env$. 
        This takes $O( n \log n)$ time, see \cite{h-gaa-11}.
        
        \item \itemlab{big:balls}%
        \textbf{Compute list of ``large'' balls intersecting each
           cell.}  For each node $\nodeA$ of $\QTree$, there is a list
        of balls registered with it. Formally, \emphi{register} a ball
        $\ballA_i$ with all the cells of $\Env_i$. Clearly, each ball
        is registered with $O(1)$ cells, and it is easy to see that
        each cell has $O(1)$ balls registered with it, since the balls
        are disjoint.
        
        Next, for a cell $\cellA$ in $\QTree$ we compute a list
        storing $\IntBallsX{\BallSetA}{\cellA}$, and these balls are
        \emphi{associated} with this cell.  These lists are computed
        in a top-down manner. To this end, propagate from a node
        $\nodeA$ its list $\IntBallsX{\BallSetA}{\cellA}$ (which we
        assume is already computed) down to its children. For a node
        receiving such a list, it scans it, and keep only the balls
        that intersect its cell (adding to this list the balls already
        registered with this cell).  For a node $\node \in \QTree$,
        let $\BallsAssocX{\node}$ be this list.

        \item \textbf{Build compressed quadtree on centers of balls.}
        Let $\CenterSetA$ be the set of centers of the balls of
        $\BallSetA$.  Build, in $O(n \log n)$ time, a compressed
        quadtree $\QTreeCen$ storing $\CenterSetA$.

        \item \textbf{\ANN for centers of balls.}  %
        Build a data structure $\DS$, for answering $2$-approximate
        $k$-nearest neighbor distances on $\CenterSetA$, the set of
        centers of the balls, see \cite{hk-drhrp-12}, where $k$ and
        $\eps$ are provided with the query.  The data structure $\DS$,
        returns a point $\ctrA \in \CenterSetA$ such that,
        $\distPk{\CenterSetA}{\query}{k} \leq \dist{\query}{\ctrA}
        \leq 2\distPk{\CenterSetA}{\query}{k}$.

        \item \textbf{Answering approximate range searching for the
           centers of balls.}  %
        
        Given a query ball $\ballQ = \ball{\query}{\num}$ and a
        parameter $\delta > 0$, one can, using $\QTreeCen$, report
        (approximately), in $O( \log n + 1/\delta^d)$ time, the points
        in $\ballQ \cap \CenterSetA$. Specifically, the query process
        computes $O(1/\delta^d)$ sets of points, such that their union
        $\setA$, has the property that $ \ballQ \cap \CenterSetA
        \subseteq \setA \subseteq (1+\delta)\ballQ \cap \CenterSetA$,
        where $(1+\delta)\ballQ$ is the scaling of $\ballQ$ by a
        factor of $1+\delta$ around its center. Indeed, compute the
        set $\GridSetB{\ballQ}$, and then using cell queries in
        $\QTreeCen$ compute the corresponding cells (this takes $O(
        \log n)$ time). Now, descend to the relevant level of the
        quadtree to all the cells of the right size, 
        that intersect $\ballQ$. Clearly, the union of points 
        stored in their subtrees
        are the desired set.  This takes overall $O( \log n +
        1/\delta^d)$ time.
        
        A similar data structure for approximate range searching is
        provided by Arya and Mount \cite{am-ars-00}, and our
        description above is provided for the sake of completeness.
        
    \end{compactenum}
    Overall, it takes $O(n \log n)$ time to build this data structure.
\end{datastructure}
We denote the collection of data structures above 
by \dsref{everything} and where necessary,  
specific functionality it provides, say for finding the large balls 
intersecting a cell, by \dsref{everything} \itemref{big:balls}.

\subsection{Approximate range counting among balls}
\seclab{A:impl}

We need the ability to answer approximate range counting queries on a
set of disjoint balls. Specifically, given a set of disjoint 
balls $\BallSetA$, and a
query ball $\ballA$, the target is to compute the size of the set
$\ballA \cap \BallSetA = \brc{ \ballB \in \BallSetA \sep{ \ballB \cap
      \ballA \neq \emptyset }}$. To make this query computationally
fast, we allow an approximation.  More precisely, for a ball $\ballA$
a set $\ApproxX{\ballA}$ is a \emphi{$(1+\delta)$-ball} of $\ballA$, if
$\ballA \subseteq \ApproxX{\ballA} \subseteq (1+\delta)\ballA$, where
$(1+\delta)\ballA$ is the $(1+\delta)$-scaling of $\ballA$ around its
center.  The purpose here, given a query ball $\ballA$, is to compute
the size of the set $\ApproxX{\ballA} \cap \BallSetA$ for some
$(1+\delta)$-ball $\ApproxX{\ballA}$ of $\ballA$. 


\begin{lemma}%
    \lemlab{relevant:cells}%
    Given a compressed quadtree $\QTree$ of size $n$, a convex set
    $\setA$, and a parameter $\delta > 0$, one can compute the set of
    nodes in $\QTree$, that realizes $\GridApproxX{\setA}{\delta}$
    (see \defref{grid:approx}), in $O\pth{\log n + 1/\delta^d }$ time.
    Specifically, this outputs a set $\NodeSetA$ of nodes, of size
    $O\pth{1/\delta^d}$, such that their cells intersect
    $\GridApproxX{\setA}{\delta}$, and their parents cell diameter is
    larger than $\delta \diamX{\setA}$. Note that the cells in
    $\NodeSetA$ might be significantly larger if they are leaves of
    $\QTree$.
\end{lemma}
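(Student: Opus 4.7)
The plan is to do a single $O(\log n)$-time point-location query in the compressed quadtree $\QTree$ to find a node whose cell tightly encloses $\setA$, and then descend from it, pruning any children whose cells miss $\setA$, until the cells either shrink to the target side length $r^\ast = 2^{\floor{\log_2(\delta\diamX{\setA}/\sqrt{d})}}$ or we bottom out at a leaf of $\QTree$. The output $\NodeSetA$ is then the collection of nodes where the descent halts.

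First I would compute a bounding cube of $\setA$ and, from it, the smallest canonical cube $\cellA_0 \supseteq \setA$; its side length is $\Theta(\diamX{\setA})$. A point-location query in $\QTree$ for an arbitrary point of $\setA$, followed by a jump to the ancestor at the level of $\cellA_0$, returns in $O(\log n)$ time the lowest node $\nodeA_0 \in \QTree$ whose cell contains $\cellA_0$; any node strictly above $\nodeA_0$ is irrelevant because $\setA \subseteq \cellA_0$, so the descent is initiated at $\nodeA_0$.

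Next I would traverse $\QTree$ recursively starting from $\nodeA_0$: at each visited node $\nu$, discard it if its cell does not meet $\setA$; add it to $\NodeSetA$ if its cell has side length at most $r^\ast$ or $\nu$ is a leaf of $\QTree$; otherwise, recurse on all children of $\nu$ (at most $2^d$ ``standard'' children plus possibly one compressed child). By construction every output node either is a leaf or has side length $\leq r^\ast$ while its parent has side length $> r^\ast$, so its parent's cell diameter exceeds $\delta\diamX{\setA}$, as required.

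Finally I would bound $|\NodeSetA|$ and the running time by a convex-packing argument. Since $\setA$ is convex with diameter $\diamX{\setA}$, the number of grid cells of side length $r$ that meet $\setA$ is $O\pth{(\diamX{\setA}/r)^d + 1}$. At the target level this is $O(1/\delta^d)$, and summing over the $O(\log(1/\delta))$ coarser levels with side lengths $r^\ast, 2r^\ast, 4r^\ast, \ldots, \Theta(\diamX{\setA})$ gives a geometric series $\sum_{i \geq 0} O\pth{1/(2^i\delta)^d} = O(1/\delta^d)$; hence both the output size and the total traversal cost are $O(1/\delta^d)$, and adding the point-location cost yields the claimed $O(\log n + 1/\delta^d)$. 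The main subtlety will be handling compressed edges correctly: a child across such an edge can be much smaller than its parent, so we cannot account for a whole level of ``missing'' refinement per compressed edge; however, each node still has only $O(1)$ children, and any child whose side length has already dropped below $r^\ast$ is simply added to $\NodeSetA$ as if a leaf, so the charging to target-level cells meeting $\setA$ goes through intact.
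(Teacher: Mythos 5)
There is a genuine gap in the first phase of your argument: the claim that the smallest canonical cube $\cellA_0 \supseteq \setA$ has side length $\Theta(\diamX{\setA})$ is false. If $\setA$ straddles a coarse grid hyperplane (e.g., a tiny ball centered at $(1/2,\ldots,1/2)$), the smallest canonical cube containing it is the entire unit cube, so $\nodeA_0$ can be the root. The descent from $\nodeA_0$ is then not bounded by $O(\log n + 1/\delta^d)$: your packing argument only controls the levels with side length between $r^\ast$ and $\Theta(\diamX{\setA})$, but above the scale of $\diamX{\setA}$ there can be $\Theta(n)$ nodes of the compressed quadtree whose cells all intersect $\setA$ (take input points at $(1/2+2^{-i},1/2,\ldots,1/2)$ for $i=2,\ldots,n+1$; the resulting quadtree has a path of $\Theta(n)$ nodes whose cells all touch the center, and a pruned descent visits every one of them). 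So the traversal cost is $\Theta(n)$ in the worst case, not $O(\log n + 1/\delta^d)$.

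The paper avoids this by never descending from a single enclosing cube. Instead it covers $\setA$ by the $O(1)$ canonical cells of $\GridSetB{\setA} = \GridApproxX{\setA}{1}$, each of side length $2^{\floor{\log_2(\diamX{\setA}/\sqrt{d})}}$, and performs a separate cell query in $\QTree$ for each of them (each query returns the matching node or the compressed edge spanning it, in $O(\log n)$ time). The descent then starts at cells that are already at the scale of $\setA$ and proceeds only $O(\log(1/\delta))$ levels further down, which is where your geometric-series bound correctly applies. Your second and third paragraphs (the pruned recursion, the stopping rule, the handling of compressed edges, and the packing bound for levels at or below $\diamX{\setA}$) are essentially the paper's argument and are fine; you only need to replace the single starting node $\nodeA_0$ with the $O(1)$ cell queries for $\GridSetB{\setA}$.
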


\begin{proof}
    Let $\GridAC = \GridApproxX{\setA}{1}$ be the grid approximation
    to $\setA$. Using cell queries on the compressed quadtree, one can
    compute the cells of $\QTree$ that corresponds to these canonical
    cells. Specifically, for each cube $\cellA \in \GridSetB{\setA}$,
    the query either returns a node for which this is its cell, or it
    returns a compressed edge of the quadtree; that is, two cells (one
    is a parent of the other), such that $\cellA$ is contained in of
    them and contains the other. Such a cell query takes $O( \log n)$
    time \cite{h-gaa-11}. This returns $O( 1)$ nodes in $\QTree$ such that
    their cells cover $\GridSetB{\setA}$.
    
    Now, traverse down the compressed quadtree starting from these
    nodes and collect all the nodes of the quadtree that are
    relevant. Clearly, one has to go at most $O( \log 1/\delta)$
    levels down the quadtree to get these nodes, and this takes
    $O(1/\delta^d)$ time overall.
\end{proof}

\begin{lemma}
    \lemlab{large:balls}%
    Let $\setA$ be any convex set in $\Re^d$, and let $\delta > 0 $ be
    a parameter. Using \dsref{everything}, one can compute, in
    $O\pth{\log n + 1/\delta^d}$ time, all the balls of $\BallSetA$
    that intersect $\setA$, with diameter $\geq \delta \diamX{
       \setA}$.
\end{lemma}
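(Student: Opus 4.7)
The plan is to leverage the precomputed associations in \dsref{everything} \itemref{big:balls} together with \lemref{relevant:cells}, so that each candidate ball is examined in $O(1)$ time.

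First, I would invoke \lemref{relevant:cells} on the convex set $\setA$ with parameter $\delta$, obtaining in $O(\log n + 1/\delta^d)$ time a set $\NodeSetA$ of $O(1/\delta^d)$ quadtree nodes whose cells cover $\setA$ and sit at the target canonical level (diameter at most $\delta \diamX{\setA}$), with the caveat that some returned nodes may be leaves whose cells are strictly larger. Next, for each $\node \in \NodeSetA$ I scan its precomputed list $\BallsAssocX{\node} = \IntBallsX{\BallSetA}{\cellA_\node}$; by the same packing argument that drives \lemref{cdim:bound} this list has $O(1)$ entries. For each ball $\ballB$ in such a list I perform the two explicit tests $\diamX{\ballB} \geq \delta\diamX{\setA}$ and $\ballB \cap \setA \neq \emptyset$ in $O(1)$ time (the convex sets $\setA$ arising in the applications are balls or cubes, so this is primitive), and report $\ballB$ iff both tests pass. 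A marker bit reset at the end of the query guarantees that each reported ball is output at most once; since any ball of diameter $\geq \delta\diamX{\setA}$ appears in at most $O(1)$ of the lists $\BallsAssocX{\node}$ over $\node \in \NodeSetA$, the total work is $O(\log n + 1/\delta^d)$.

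Soundness is immediate from the two filter tests. The main obstacle is completeness, specifically the ``leaf'' corner case flagged in \lemref{relevant:cells}, in which some $\cellA_\node$ may be larger than $\delta \diamX{\setA}$, so that \emph{a priori} a ball with $\delta \diamX{\setA} \leq \diamX{\ballB} < \diamX{\cellA_\node}$ would be missing from $\BallsAssocX{\node}$ (whose entries must have diameter at least $\diamX{\cellA_\node}$). To rule this out, fix any valid $\ballB$, pick a witness $\pnt \in \ballB \cap \setA$, and let $\node \in \NodeSetA$ be the node whose cell contains $\pnt$. Let $\cellA_g$ be the canonical cell of $\GridSetB{\ballB}$ that contains $\pnt$; since $\diamX{\cellA_g} = \Theta(\diamX{\ballB})$ and two canonical cells are either nested or disjoint, we have $\cellA_g \subseteq \cellA_\node$. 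If $\diamX{\cellA_\node} > \diamX{\ballB}$ then this containment is strict, so $\cellA_g$ is a cell of $\Env$, hence of $\QTree$, lying strictly inside the leaf cell $\cellA_\node$, a contradiction. Therefore $\diamX{\cellA_\node} \leq \diamX{\ballB}$, whence $\ballB \in \IntBallsX{\BallSetA}{\cellA_\node} = \BallsAssocX{\node}$ and is seen by the scan, yielding the claimed bound.
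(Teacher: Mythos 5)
Your proposal takes the same route as the paper's proof: run \lemref{relevant:cells} on $\setA$, scan the precomputed $O(1)$-size lists of associated large balls at the $O(1/\delta^d)$ returned nodes, and keep exactly the balls passing the two explicit tests. The one substantive difference is that the paper scans the list of each returned node \emph{and of its parent}, and your completeness argument quietly needs this. You begin by picking ``the node $\node \in \NodeSetA$ whose cell contains $\pnt$,'' but \lemref{relevant:cells} does not guarantee that the returned cells cover $\setA$: across a compressed edge of $\QTree$ whose outer cell is coarser than the target level and whose inner cell is finer, the descent returns the inner node, and a witness point $\pnt$ lying in the annulus between the two cells is contained in no returned cell, so your argument has no node to hand the ball to. In that situation the ball $\ballB$ sits in the list of the \emph{outer} node --- indeed, your own nesting argument applies: the cell of $\GridSetB{\ballB}$ containing $\pnt$ is a cell of $\QTree$, it cannot lie inside the inner cell (which misses $\pnt$) nor strictly between the two cells of the compressed edge, so it contains the outer cell, whence $\diamX{\ballB}$ is at least the outer cell's diameter and $\ballB$ is associated with it. That is precisely the parent's list. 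So either add the parents to the scan, as the paper does, or strengthen the statement of \lemref{relevant:cells} to a covering guarantee.

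Apart from this omission, your argument is sound, and your handling of the leaf corner case is more rigorous than the paper's one-line proof: the observation that a qualifying ball intersecting an oversized leaf cell must have diameter at least that cell's diameter --- since otherwise one of its registered grid cells would be a node of $\QTree$ properly inside the leaf --- is exactly what is needed to see that the leaf's associated list misses nothing, and the paper leaves it implicit.
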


\begin{proof}
    We compute the cells of the quadtree realizing
    $\GridApproxX{\setA}{\delta}$ using \lemref{relevant:cells}. Now,
    from each such cell (and its parent), we extract the list of large
    balls intersecting it (there are $O(1/\delta^d)$ such nodes, and
    the size of each such list is $O(1)$). Next we check for each such
    ball if it intersects $\setA$ and if its diameter is at least
    $\delta \diamX{ \setA}$. We return the list of all such balls.
\end{proof}

\subsection{Answering a query}

Given a query ball $\ballQ = \ball{\query}{\num}$, and an
approximation parameter $\delta > 0$, our purpose is to compute a
number $\NxNum$, such that $\BSetNum{\MakeBig\query}{\num} \leq \NxNum
\leq \BSetNum{\MakeBig\query}{(1+\delta)\num}$.

The query algorithm works as follows:
\begin{compactenum}[\qquad(A)]
    \item Using \lemref{large:balls}, compute a set $\setA$ of all the
    balls that intersect $\ballQ$ and are of radius $\geq \delta
    \num/4$.
    
    \item Using \dsref{everything}, compute
    $O(1/\delta^d)$ cells of $\QTreeCen$ that corresponds to $
    \GridApproxX{\ballQ(1+\delta/4)}{\delta/4}$.  Let $\NxNum'$ be the
    total number of points in $\CenterSetA$ stored in these nodes.
    
    \item The quantity $\NxNum' + \cardin{\setA}$ is almost the
    desired quantity, except that we might be counting some of the
    balls of $\setA$ twice. To this end, let $\NxNum''$ be the number
    of balls in $\setA$ with centers in
    $\GridApproxX{\ballQ(1+\delta/4)}{\delta/4}$
    
    \item Let $\NxNum \leftarrow \NxNum' + \cardin{\setA} -
    \NxNum''$. Return $\NxNum$.
\end{compactenum}

We only sketch the proof, as the proof is straightforward.  Indeed,
the union of the cells of $\GridApproxX{\ballQ(1+\delta/4)}{\delta/4}$
contains $\ball{\query}{\num(1+\delta/4)}$ and is contained in
$\ball{\query}{(1+\delta)\num}$. All the balls with radius smaller
than $\delta \num/4$ and intersecting $\ball{\query}{\num}$ have their
centers in cells of $\GridApproxX{\ballQ(1+\delta/4)}{\delta/4}$, and
their number is computed correctly. Similarly, the ``large'' balls are
computed correctly. The last stage ensures we do not over-count by $1$
each large ball that also has its center in
$\GridApproxX{\ballQ(1+\delta/4)}{\delta/4}$. It is also easy to check
that $\BSetNum{\query}{\num} \leq
\NxNum \leq \BSetNum{\query}{\num(1+\delta)}$. The same result can be
used for $\num/(1+\delta)$ to get $\delta$-monotonicity of 
$\NxNum$. 

We now analyze the running time.
Computing all the cells of
$\GridApproxX{\ballQ(1+\delta/4)}{\delta/4}$ takes $O(\log n +
1/\delta^{d})$ time.  Computing the ``large'' balls takes $O\pth{\log
   n + 1/\delta^d}$ time. Checking for each large ball if it is
already counted by the ``small'' balls takes $O(1/\delta^d)$ by using
a grid. We denote the above query algorithm by 
\algBSetX{\query}{\delta}{\num}.

 The above implies the following.

\begin{lemma}
    \lemlab{Aimpl:main}%
    Given a set $\BallSetA$ of $n$ disjoint balls in $\Re^d$, it can be
    preprocessed, in $O(n \log n)$ time, into a data structure of size
    $O(n)$, such that given a query ball $\ball{\query}{\num}$ and
    approximation parameter $\delta > 0$, the query algorithm
    \algBSetX{\query}{\delta}{\num} returns, in $O(\log n +
    1/\delta^{d})$ time, a number $\NxNum$ satisfying the following:
    \begin{compactenum}[\qquad(A)]
        \item $\NxNum \leq \BSetNum{\query}{(1+\delta)\num}$,
        \item $\BSetNum{\query}{\num} \leq \NxNum$, and
        \item for a query ball $\ball{\query}{\cval}$ and $\delta$,
        the number $\NxNum$ is $\delta$-monotonic as a function of
        $\cval$, see \defref{d:monotonic}.
    \end{compactenum}
\end{lemma}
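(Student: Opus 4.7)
The plan is to derive all three conclusions from the analysis of the query algorithm \algBSetX{\query}{\delta}{\num} sketched immediately above the lemma. The preprocessing bounds of $O(n\log n)$ time and $O(n)$ space were already established when \dsref{everything} was assembled: each ball registers with $O(1)$ quadtree cells, each cell carries $O(1)$ large-ball pointers by disjointness of the input balls, and the top-down propagation in step \itemref{big:balls} yields lists of total linear size. So the remaining work is to verify correctness and running time of \algBSetX{\query}{\delta}{\num}.

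For correctness, I would split the balls of $\BSet{\query}{(1+\delta)\num}$ into those of radius at least $\delta\num/4$ (``large'') and the rest (``small''). Applying \lemref{large:balls} to the convex set $\ballQ$ with parameter $\delta/4$ (noting $\diamX{\ballQ}=2\num$) returns exactly the large balls intersecting $\ballQ$, so $\cardin{\setA}$ is exact. For a small ball $\ball{\ctrA}{\radA}$ intersecting $\ball{\query}{\num}$, the triangle inequality yields $\norm{\ctrA-\query}\leq\num+\radA<(1+\delta/4)\num$, so $\ctrA$ lies in $\ball{\query}{(1+\delta/4)\num}$ and hence in the union $\mathcal{U}$ of cells of $\GridApproxX{\ball{\query}{(1+\delta/4)\num}}{\delta/4}$. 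Conversely, by \defref{grid:approx} each such cell has diameter at most $(\delta/4)\cdot 2(1+\delta/4)\num$, so any point of $\mathcal{U}$ is within distance $(1+\delta/4)\num(1+\delta/2)\leq(1+\delta)\num$ from $\query$ for $\delta\in(0,1)$, giving $\mathcal{U}\subseteq\ball{\query}{(1+\delta)\num}$. Therefore $\NxNum'$ counts every small ball of $\BSet{\query}{\num}$ but only centers of balls in $\BSet{\query}{(1+\delta)\num}$, and $\NxNum''$ is exactly the double-count of those large balls in $\setA$ whose centers also lie in $\mathcal{U}$. Hence $\NxNum=\NxNum'+\cardin{\setA}-\NxNum''$ counts each ball of $\BallSetA$ that is either large-and-intersects-$\ballQ$ or small-with-center-in-$\mathcal{U}$ exactly once, giving both $\BSetNum{\query}{\num}\leq\NxNum$ and $\NxNum\leq\BSetNum{\query}{(1+\delta)\num}$.

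The $\delta$-monotonicity in the radius $\cval$ is then immediate from the pincer: applying the two bounds at radius $\cval/(1+\delta)$ gives $\NxNum(\cval/(1+\delta))\leq\BSetNum{\query}{\cval}\leq\NxNum(\cval)$, which is the condition in \defref{d:monotonic}. The running time is additive across the three substeps: step (A) costs $O(\log n+1/\delta^d)$ by \lemref{large:balls}; step (B) costs the same via the approximate range machinery in \dsref{everything}; and step (C) is handled by bucketing the $O(1/\delta^d)$ cells of $\mathcal{U}$ in a hash table keyed by cell coordinates, so each of the $O(1/\delta^d)$ balls of $\setA$ can be tested in $O(1)$. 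The main delicacy I anticipate is just keeping the chain of radius scalings consistent so that the small-ball bound lands strictly inside $\ball{\query}{(1+\delta)\num}$; the $\delta/4$ slack chosen in the algorithm appears calibrated precisely to absorb both the $\radA$ term from the triangle inequality and the cell-diameter term from the grid approximation.
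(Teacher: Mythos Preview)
Your proposal is correct and follows essentially the same argument the paper sketches in the paragraph preceding the lemma: the large/small split at radius $\delta\num/4$, the containment $\ball{\query}{(1+\delta/4)\num}\subseteq\mathcal{U}\subseteq\ball{\query}{(1+\delta)\num}$, the double-count correction via $\NxNum''$, and the derivation of $\delta$-monotonicity by applying the pincer at radius $\cval/(1+\delta)$. Your write-up simply fills in the calculations the paper leaves implicit.
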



\section{Answering $k$-\ANN queries among balls}
\seclab{k:at:query} 

\subsection{Computing a constant factor approximation to
   $\distPk{\BallSetA}{\query}{k}$}

\begin{lemma}
    \lemlab{int:large}%
    Let $\BallSetA$ be a set of disjoint balls in $\Re^d$, and
    consider a ball $\ballA = \ball{\query}{r}$ that intersects at
    least $k$ balls of $\BallSetA$. Then, among the $k$ nearest
    neighbors of $\query$ from $\BallSetA$, there are at least
    $\max(0,k-\cDim)$ balls of radius at most $r$. The centers of all
    these balls are in $\ball{\query}{2r}$.
\end{lemma}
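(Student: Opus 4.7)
The plan is to use the definition of $\cDim$ directly after observing that any ball intersecting $\ball{\query}{r}$ is at distance at most $r$ from $\query$, and vice versa any ball within distance $r$ of $\query$ must intersect $\ball{\query}{r}$.

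First I would observe that if $\ballB = \ball{\ctrB}{\radB} \in \BallSetA$ intersects $\ball{\query}{r}$, then $\norm{\query - \ctrB} \leq r + \radB$, which implies $\dist{\query}{\ballB} = \max(\norm{\query - \ctrB} - \radB, 0) \leq r$. Applying this to the at least $k$ balls of $\BallSetA$ that intersect $\ballA$, we conclude $\distPk{\BallSetA}{\query}{k} \leq r$; that is, all $k$ nearest neighbors of $\query$ in $\BallSetA$ are at distance at most $r$ from $\query$.

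Next, among the $k$ nearest neighbors, consider any ball $\ballB = \ball{\ctrB}{\radB}$ with $\radB > r$. Since $\dist{\query}{\ballB} \leq r$, we have $\norm{\query - \ctrB} \leq r + \radB$, so $\ballB$ intersects $\ball{\query}{r}$. Each such ball has radius at least $r$ and the collection is pairwise disjoint (since $\BallSetA$ is disjoint), so by the definition of $\cDim$, the number of such ``large'' balls (radius $> r$) among the $k$ nearest neighbors is at most $\cDim$. Therefore at least $\max(0, k - \cDim)$ of the $k$ nearest neighbors have radius at most $r$.

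Finally, for any ball $\ballB = \ball{\ctrB}{\radB}$ among the $k$ nearest neighbors with $\radB \leq r$, the bound $\dist{\query}{\ballB} \leq r$ gives $\norm{\query - \ctrB} \leq r + \radB \leq 2r$, so $\ctrB \in \ball{\query}{2r}$. This is essentially unobstructed; the only subtlety is keeping straight the three relations (distance-to-ball, distance-to-center, intersection with $\ball{\query}{r}$) and noticing that the role of $\cDim$ is precisely to bound the number of disjoint large balls that can all be ``close'' to $\query$ in the ball-distance sense.
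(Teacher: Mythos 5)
Your proposal is correct and follows essentially the same argument as the paper: the $k$ nearest neighbors all intersect $\ball{\query}{r}$, the disjoint ones of radius $\geq r$ among them number at most $\cDim$ by the definition of $\cDim$, and the remaining (small-radius) ones have centers within $2r$ of $\query$ by the triangle inequality. The only cosmetic difference is that the paper phrases the middle step contrapositively (center outside $\ball{\query}{2r}$ forces radius $\geq r$), which is logically the same bookkeeping you do.
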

\begin{proof}
    Consider the $k$ nearest neighbors of $\query$ from
    $\BallSetA$. Any such ball that has its center outside
    $\ball{\query}{2r}$, has radius at least $r$, since it intersects
    $\ballA = \ball{\query}{r}$.  Since the number of balls that are
    of radius at least $r$ and intersecting $\ballA$ is bounded by
    $\cDim$, there must be at least $\max(0,k - \cDim)$ balls among
    the $k$ nearest neighbors, each having radius less than $r$.  Now,
    $\ball{\query}{2r}$ will contain the centers of all such balls.
\end{proof}

\begin{corollary}
    \corlab{int:large:cor}%
    Let $\numA = \min(k, \cDim)$. Then,
    $\distPk{\CenterSetA}{\query}{k - \numA}/2 \leq
    \distPk{\BallSetA}{\query}{k}$.
\end{corollary}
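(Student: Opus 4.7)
The plan is to apply \lemref{int:large} directly with $r = \distPk{\BallSetA}{\query}{k}$. By definition of the $k$\th-nearest neighbor distance, the ball $\ball{\query}{r}$ intersects at least $k$ balls of $\BallSetA$ (taking a limit / the closed ball of radius $r$), so the hypothesis of \lemref{int:large} is satisfied.

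First I would dispose of the degenerate case $k \leq \cDim$: then $\numA = k$, so the claim reduces to $\distPk{\CenterSetA}{\query}{0}/2 \leq \distPk{\BallSetA}{\query}{k}$, which holds trivially since the left-hand side is $0$ (or the statement is vacuous). In the interesting case $k > \cDim$ we have $\numA = \cDim$, and \lemref{int:large} guarantees that at least $k - \cDim$ of the $k$ nearest balls to $\query$ from $\BallSetA$ have their centers inside $\ball{\query}{2r}$. Hence $\ball{\query}{2r}$ contains at least $k-\cDim$ points of $\CenterSetA$, which immediately yields $\distPk{\CenterSetA}{\query}{k-\cDim} \leq 2r = 2\distPk{\BallSetA}{\query}{k}$, i.e.\ the claimed inequality after dividing by $2$.

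There is essentially no obstacle here; the corollary is just a repackaging of \lemref{int:large} applied at the exact radius $r = \distPk{\BallSetA}{\query}{k}$. The only minor subtlety is ensuring the lemma applies at this critical radius (where ``intersects at least $k$ balls'' might be borderline), but this is handled by the closedness of the balls, or alternatively by applying the lemma at radius $r + \eta$ for arbitrary $\eta > 0$ and letting $\eta \to 0$.
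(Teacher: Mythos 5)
Your proof is correct and is exactly the intended derivation: the paper states the corollary without proof, as an immediate consequence of Lemma~\ref{lemma:int:large} applied at $r = \distPk{\BallSetA}{\query}{k}$ (the closed ball of that radius intersects at least $k$ input balls, so at least $k-\cDim$ of the $k$ nearest have centers in $\ball{\query}{2r}$). Your handling of the degenerate case $k \leq \cDim$ and the remark about closedness at the critical radius are fine and do not change the argument.
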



The basic observation is that we only need a rough approximation to
the right radius, as using approximate range counting (i.e.,
\lemref{Aimpl:main}), one can improve the approximation.

Let $\num_{i}$ denote the distance of $\query$ to the $i$\th closest
center in $\CenterSetA$. Let $\kdist = \distPk{\BallSetA}{\query}{k}$.
Let $i$ be the minimum index, such that $\kdist \leq x_i$.  Since
$\kdist \leq \num_k$, it must be that $i \leq k$.  There are several
possibilities:
\begin{compactenum}[\;\;(A)]
    \item If $i \leq k-\cDim$ (i.e., $\kdist \leq \num_{k - \cDim}$)
    then, by \lemref{int:large}, the ball $\ball{\query}{2 \kdist}$
    contains at least $k - \cDim$ centers. As such, $\kdist < \num_{k
       - \cDim} \leq 2 \kdist$, and $\num_{k - \cDim}$ is a good
    approximation to $\kdist$.

    \item If $i > k-\cDim$, and  $\kdist \leq 4 x_{i-1}$, then 
    $x_{i-1}$ is the desired approximation.

    \item If $i > k-\cDim$, and $\kdist \geq x_{i}/4$, then $x_{i}$ is
    the desired approximation.

    \item Otherwise, it must be that $i > k-\cDim$, and $4x_{i-1}
    < \kdist < x_{i}/4$.  Let $\ballA_j =
    \ball{\ctrA_j}{\radA_j}$ be the $j$\th closest ball to $\query$,
    for $j=1,\ldots, k$.  It must be that $\ballA_{i}, \ldots,
    \ballA_{k}$ are much larger than $\ball{\query}{\kdist}$.  But
    then, the balls $\ballA_{i}, \ldots, \ballA_{k}$ must intersect
    $\ball{\query}{x_{i}/2}$, and their radius is at least $x_i/2$.
    We can easily compute these big balls using \dsref{everything}
    \itemref{big:balls}, and the number of centers of the small balls
    close to query, and then compute $\kdist$ exactly.
\end{compactenum}

We build \dsref{everything} in $O(n \log n)$ time.

First we introduce some notation.
For $\num \geq 0$, let $\NxNumX{\num}$ denote the number of balls in
$\BallSetA$ that intersect $\ball{\query}{\num}$; that is
$\NxNum(\num) = \cardin{\brc{\ballA \in \BallSetA \sep{ \ballA \cap
         \ball{\query}{\num} \neq \emptyset}}}$, and $\CxNum(\num)$
denote the number of centers in $\ball{\query}{\num}$, i.e.,
$\CxNum(\num) = \cardin{\CenterSetA \cap \ball{\query}{\num}}$.  Also,
let $\bcountX{\num}$ denote the $2$-approximation to the number of
balls of $\BallSetA$ intersecting $\ball{\query}{\num}$, as computed
by \lemref{Aimpl:main}; that is $\NxNumX{\num} \leq \bcountX{\num}
\leq \NxNumX{2\num}$.

We now provide our algorithm to answer a query.
We are given a query point $\query \in \Re^d$ and a number $k$.

Using \dsref{everything}, compute a $2$-approximation for the smallest
ball containing $k-i$ centers of $\BallSetA$, for $i=0,\ldots, \numA$,
where $\numA = \min(k,\cDim)$, and let $r_{k-i}$ be this radius. That
is, for $i=0,\ldots, \numA$, we have $\CxNum(r_{k-i}/2) \leq k -i \leq
\CxNum(r_{k-i})$.  For $i=0, \ldots, \numA$, compute $N_{k-i} =
\bcountX{r_{k-i}}$ (\lemref{Aimpl:main}).

Let $\alpha$ be the maximum index such that $N_{k-\alpha} \geq k$.
Clearly, $\alpha$ is well defined as $N_k \geq k$. The algorithm is
executed in the following steps.
\begin{compactenum}[\quad(A)]
    \item \itemlab{step:A} %
    If $\alpha = \numA$ we return $2 r_{k - \numA}$.
    \item \itemlab{step:B} If $\bcountX{r_{k-\alpha}/4} < k$, we
    return $2 r_{k-\alpha}$.
    \item \itemlab{step:C} Otherwise, compute all the balls of
    $\BallSetA$ that are of radius at least $r_{k - \alpha}/4$ and
    intersect the ball $\ball{\query}{r_{k-\alpha}/4}$, using
    \dsref{everything} \itemref{big:balls}.  For each such ball
    $\ballA$, compute the distance $\numB = \dist{\query}{\ballA}$ of
    $\query$ to it.  Return $2 \numB$ for the minimum such number
    $\numB$ such that $\bcountX{\numB} \geq k$.
\end{compactenum}

\begin{lemma}
    \lemlab{const:fact:apprx}
    Given a set of $n$ disjoint balls $\BallSetA$ in $\Re^d$, one can
    preprocess them, in $O(n \log n)$ time, into a data structure of
    size $O(n)$, such that given a query point $\query \in \Re^d$, and
    a number $k$, one can compute, in $O(\log n)$ time, a number
    $\num$ such that, $\num/4 \leq \distPk{\BallSetA}{\query}{k} \leq
    4 \num$.
\end{lemma}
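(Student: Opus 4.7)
The preprocessing phase is immediate: build $\dsref{everything}$, which by its construction has $O(n)$ size and is built in $O(n \log n)$ time. What remains is to argue that the query algorithm presented just before the lemma runs in $O(\log n)$ time and produces a constant-factor estimate of $\kdist = \distPk{\BallSetA}{\query}{k}$.

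For the running time I would bound each step separately. Computing the radii $r_{k-i}$ for $i = 0, \ldots, \numA$ calls the $2$-approximate $k$\th-NN substructure $\DS$ inside $\dsref{everything}$ a constant number of times (at most $\cDim + 1$), each at cost $O(\log n)$. The approximate counts $N_{k-i} = \bcountX{r_{k-i}}$ come from $\lemref{Aimpl:main}$ with constant $\delta$; again $O(\log n)$ per call, $O(1)$ calls. Identifying $\alpha$ takes $O(1)$, step (A) is trivial, and step (B) uses one further $\bcountX$ query. Step (C) invokes $\dsref{everything} \itemref{big:balls}$ with $\delta = \Theta(1)$ to enumerate $O(1)$ ``large'' candidate balls in $O(\log n)$ time, computes one distance per ball, and performs $O(1)$ additional $\bcountX$ queries. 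Adding everything gives $O(\log n)$.

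For correctness I would argue case by case, leaning on two ingredients: the sandwich $\NxNumX{\num} \leq \bcountX{\num} \leq \NxNumX{2\num}$ from $\lemref{Aimpl:main}$, and the bridge $\distPk{\CenterSetA}{\query}{k-\numA}/2 \leq \kdist$ from $\corref{int:large:cor}$. In case (A), $N_{k-\numA} \geq k$ yields $\kdist \leq 2 r_{k-\numA}$, while the bridge combined with $r_{k-\numA} \leq 2\distPk{\CenterSetA}{\query}{k-\numA}$ produces $\kdist \geq r_{k-\numA}/4$. In case (B) the same sandwich yields $\kdist \leq 2 r_{k-\alpha}$, and $\bcountX{r_{k-\alpha}/4} < k$ forces $\NxNumX{r_{k-\alpha}/4} < k$, so $\kdist > r_{k-\alpha}/4$. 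In case (C), the maximality of $\alpha$ (so $N_{k-\alpha-1} < k$) together with $\bcountX{r_{k-\alpha}/4} \geq k$ places $\kdist$ between $r_{k-\alpha-1}$ and $r_{k-\alpha}/2$; since $\lemref{int:large}$ limits to $\cDim$ the number of disjoint balls of radius $\geq r_{k-\alpha}/4$ that can meet $\ball{\query}{r_{k-\alpha}/4}$, the $k$\th-nearest ball must be among them, and $\itemref{big:balls}$ returns this set exactly, so scanning the candidates in order of distance and testing with $\bcountX$ locates $\kdist$ up to a constant factor.

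The main difficulty I anticipate is the bookkeeping for case (C): one must verify both that $\itemref{big:balls}$ (invoked at the scale $r_{k-\alpha}/4$) misses no ball that could realize the $k$\th distance, and that returning $2\numB$ for the minimum $\numB$ with $\bcountX{\numB} \geq k$ straddles $\kdist$ within the claimed constant. The remaining cases are essentially mechanical once the sandwich and bridge inequalities are in place.
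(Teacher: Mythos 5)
Your preprocessing, running-time accounting, and cases (A) and (B) track the paper's proof and are correct. The problem is case (C), which you explicitly flag as ``the main difficulty'' but then do not resolve --- and it is exactly the non-mechanical part of the lemma. Knowing that $r_{k-\alpha-1} < \distPk{\BallSetA}{\query}{k} \leq r_{k-\alpha}/2$ does not by itself imply that the $k$\th closest ball is among the balls of radius at least $r_{k-\alpha}/4$ that intersect $\ball{\query}{r_{k-\alpha}/4}$: a priori the $k$\th closest ball could be a \emph{small} ball whose center sits at distance roughly $\distPk{\BallSetA}{\query}{k}$ from $\query$, and such a ball is invisible to the large-ball enumeration of \itemref{big:balls}. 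Your appeal to \lemref{int:large} only bounds \emph{how many} large candidates there are (that bound is really just the definition of $\cDim$); it says nothing about whether the $k$\th nearest neighbor is one of them.

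The missing ingredient, which the paper supplies, is an annulus-emptiness claim: no center of $\CenterSetA$ lies at distance in the range $(r_{k-\alpha-1},\, r_{k-\alpha}/2)$ from $\query$. This comes from the two-sided guarantee $\CxNum(r_{k-\alpha}/2) \leq k-\alpha \leq \CxNum(r_{k-\alpha})$, i.e., $r_{k-\alpha} \leq 2\ts\distPk{\CenterSetA}{\query}{k-\alpha}$: if some center lay in that annulus, then either $\ball{\query}{r_{k-\alpha-1}}$ would already hold $k-\alpha$ centers (contradicting $r_{k-\alpha-1} < \distPk{\BallSetA}{\query}{k} \leq r_{k-\alpha}/4$), or $\distPk{\CenterSetA}{\query}{k-\alpha} < r_{k-\alpha}/2$ would give $r_{k-\alpha} < r_{k-\alpha}$. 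Hence the center of the $k$\th closest ball, being farther than $r_{k-\alpha-1}$, is at distance at least $r_{k-\alpha}/2$; combined with $\distPk{\BallSetA}{\query}{k} \leq r_{k-\alpha}/4$ this forces that ball to have radius at least $r_{k-\alpha}/4$ and to meet $\ball{\query}{r_{k-\alpha}/4}$, so step (C) does enumerate it and the returned $\numB$ satisfies $\numB/2 \leq \distPk{\BallSetA}{\query}{k} \leq 2\numB$. Note also that your weaker bound $\distPk{\BallSetA}{\query}{k} \leq r_{k-\alpha}/2$ (which is all the sandwich literally yields from $\bcountX{r_{k-\alpha}/4} \geq k$) is not enough to run this argument at scale $r_{k-\alpha}/4$; the constants here need care. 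Without the annulus step your case (C) is an assertion, not a proof.
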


\begin{proof}
    The data structure and query algorithm are described above. We
    next prove correctness.  To prove that \itemref{step:A} returns
    the correct answer observe that under the given assumptions,
    \[
    r_{k - \numA}/4 \leq \distPk{\CenterSetA}{\query}{k - \numA} / 2
    \leq \distPk{\BallSetA}{\query}{k} \leq 2 r_{k - \numA},
    \]
    where the second inequality follows from \corref{int:large:cor},
    and the third inequality follows as $\NxNum(2 r_{k - \numA}) \geq
    \bcountX{ r_{k - \numA} } \geq k$, while
    $\distPk{\BallSetA}{\query}{k}$ is the smallest number $\num$ such
    that $\NxNum(\num) \geq k$.
    
    For \itemref{step:B} observe that we have that $\NxNum(r_{k -
       \numA}/4) \leq \bcountX{r_{k - \numA}/4} < k$ and as such we
    have $r_{k - \numA}/4 < \distPk{\BallSetA}{\query}{k}$. But by
    assumption, $\bcountX{r_{k - \numA}} \geq k$ and so $\NxNum(2 r_{k
       - \numA}) \geq \bcountX{r_{k - \numA}} \geq k$, thus
    $\distPk{\BallSetA}{\query}{k} \leq 2 r_{k - \numA}$.
    
    For \itemref{step:C}, first observe that $\alpha < \numA$ as the
    algorithm did not return in \itemref{step:A}.  Since $\alpha$ is
    the maximum index such that $\bcountX{r_{k - \alpha}} \geq k$, so
    $\NxNum(r_{k - \alpha - 1}) \leq \bcountX{r_{k - \alpha - 1}} < k$
    implying, $r_{k - \alpha - 1} < \distPk{\BallSetA}{\query}{k}$.
    Also, $\distPk{\BallSetA}{\query}{k} \leq r_{k - \alpha}/4$, as
    the algorithm did not return in \itemref{step:B}. Now the ball
    $\ball{\query}{r_{k - \alpha - 1}}$ contains at least $k - \alpha
    - 1$ centers from $\CenterSetA$, but it does not contain $k -
    \alpha$ centers. Indeed, otherwise we would have
    $\distPk{\CenterSetA}{\query}{k - \alpha} \leq r_{k - \alpha - 1}$
    and so $r_{k - \alpha} \leq 2 \distPk{\CenterSetA}{\query}{k -
       \alpha} \leq 2 r_{k - \alpha - 1}$, but on the other hand $r_{k
       - \alpha - 1} < \distPk{\BallSetA}{\query}{k} \leq r_{k -
       \alpha}/4$, which would be a contradiction. Similarly, there is
    no center of any ball whose distance from $\query$ is in the range
    $(r_{k - \alpha - 1}, r_{k - \alpha}/2)$ otherwise we would have
    that $\distPk{\CenterSetA}{\query}{k-\alpha} < r_{k - \alpha}/2$
    and this would mean that $r_{k - \alpha} \leq 2
    \distPk{\CenterSetA}{\query}{k-\alpha} < r_{k - \alpha}$, a
    contradiction.  Now, the center of the $k$\th closest ball is
    clearly more than $r_{k - \alpha - 1}$ away from $\query$. As such
    its distance from $\query$ is at least $r_{k - \alpha}/2$.  Since
    $\distPk{\BallSetA}{\query}{k} \leq r_{k - \alpha}/4$ it follows
    that the $k$\th closest ball intersects $\ball{\query}{r_{k -
          \alpha}/4}$ and moreover, its radius is at least $r_{k -
       \alpha}/4$. Since we compute all such balls in
    \itemref{step:C}, we do encounter the $k$\th closest ball. It is
    easy to see that in this case we return a number $\numB$
    satisfying, $\numB/2 \leq \distPk{\BallSetA}{\query}{k} \leq 2
    \numB$.

    \smallskip

    As for the running time, notice that we need to use the algorithm
    of \lemref{Aimpl:main} $O(1)$ times, each iteration taking time
    $O(\log n)$. After this we need another $O(\log n)$ time for the
    invocation of the algorithm in \lemref{large:balls}. As such, the
    total query time is $O(\log n)$.
\end{proof}

We now show how to refine the approximation.

\begin{lemma}
    \lemlab{count:range}%
    Given a set $\BallSetA$ of $n$ balls in $\Re^d$, it can be
    preprocessed, in $O(n \log n)$ time, into a data structure of size
    $O(n)$. Given a query point $\query$, numbers $k, \num$, and an
    approximation parameter $\eps >0$, such that $\num/4 \leq
    \distPk{\BallSetA}{\query}{k} \leq 4 \num$, one can find a ball
    $\ballA \in \BallSetA$ such that, $(1-\eps)
    \distPk{\BallSetA}{\query}{k} \leq \dist{\query}{\ballA} \leq
    (1+\eps) \distPk{\BallSetA}{\query}{k}$, in $O\pth{ \log n
       +1/\eps^d}$ time.
\end{lemma}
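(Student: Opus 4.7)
The plan is to refine the constant-factor bracket $\num/4 \leq r^* \leq 4\num$, where $r^* = \distPk{\BallSetA}{\query}{k}$, into a $(1\pm\eps)$-approximation by performing a single batched computation over all radii in the bracket, rather than repeating \lemref{Aimpl:main} inside a binary search (which would cost an extra $\log(1/\eps)$ factor). The key observation is that both the ``large'' balls and the quadtree cells relevant to any radius in $[\num/4,4\num]$ live at one scale determined by $\num$ alone, so they can be extracted once and then processed together.

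Set $\delta = c\eps$ for a sufficiently small constant $c>0$. First, apply \lemref{large:balls} to the convex set $\setA = \ball{\query}{8\num}$ with an appropriate scaling of $\delta$ to obtain, in $O(\log n + 1/\eps^d)$ time, the set $L$ of all balls in $\BallSetA$ of radius at least $\delta\num$ that intersect $\setA$; note $\cardin{L} = O(1/\eps^d)$. Next, using the compressed quadtree $\QTreeCen$ from \dsref{everything} augmented with subtree center counts, invoke \lemref{relevant:cells} on $\setA$ to extract a set $\CellSetC$ of $O(1/\eps^d)$ quadtree nodes whose cells cover $\setA$ and have diameter $O(\delta\num)$. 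Leaves of $\QTreeCen$ may be larger but contain only $O(1)$ centers each, so they can be handled directly by enumerating those centers.

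Build a weighted distance-event list: each $\ballA \in L$ contributes one event of weight $1$ at distance $\dist{\query}{\ballA}$, and each $\cellA \in \CellSetC$ contributes one event at distance $\dist{\query}{\cellA}$ (distance from $\query$ to the nearest point of $\cellA$) with weight equal to the number of ball centers in the subtree of $\cellA$ minus the number of centers of balls in $L$ lying in $\cellA$, to prevent double counting. Because every ball $\ballA \in \BallSetA \setminus L$ has radius $<\delta\num$, whenever its center lies in $\cellA$ we have $\cardin{\dist{\query}{\ballA} - \dist{\query}{\cellA}} = O(\delta\num) = O(\eps r^*)$. Moreover, \lemref{int:large} applied with $r = r^*$ shows that every one of the $k$ nearest balls is either in $L$ (if its radius is $\geq \delta\num$) or has its center in $\ball{\query}{2r^*} \subseteq \setA$, hence in some $\cellA \in \CellSetC$; thus the event list accounts, up to additive error $O(\eps r^*)$, for all contributions to the $k$ nearest distances.

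Run linear-time weighted selection on the event list to find the smallest distance $\hat{r}$ at which the cumulative weight first reaches $k$; by the error analysis above, $\hat{r} = r^* \pm O(\eps r^*)$. If the winning event is a ball in $L$, return that ball; otherwise it is a cell $\cellA$, and using $\QTreeCen$ we retrieve in $O(\log n)$ time any input ball whose center lies in $\cellA$, which by the previous bound has distance within $O(\eps r^*)$ of $r^*$. Tuning $c$ gives the stated $(1\pm\eps)$ guarantee. The main technical obstacle is the bookkeeping that prevents large balls from being counted both in $L$ and in the center counts of cells, combined with verifying via \lemref{int:large} that no relevant contributor to the $k$-th nearest distance is omitted by the fixed choice of scale $\num$. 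The overall query cost is $O(\log n + 1/\eps^d)$, as required.
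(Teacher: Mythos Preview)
Your proposal is correct and follows essentially the same approach as the paper: separate the input balls into ``large'' ones (radius $\geq \Theta(\eps\num)$), which are enumerated explicitly, and ``small'' ones, whose centers are bucketed into $O(1/\eps^d)$ cells of diameter $O(\eps\num)$ with per-cell counts (after subtracting large-ball centers to avoid double counting); then perform linear-time weighted selection on the resulting $O(1/\eps^d)$ distance events. The only cosmetic differences are that the paper uses the canonical grid cells $\GridApproxX{\ball{\query}{4\num(1+\eps)}}{\eps/16}$ and an arbitrary representative point per cell, whereas you use quadtree cells from \lemref{relevant:cells} and the cell's nearest-point distance; both choices incur the same $O(\eps\num)$ additive error, and your explicit appeal to \lemref{int:large} for completeness is a nice touch that the paper leaves implicit.
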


\begin{proof}
    We are going to use the same data structure as
    \lemref{Aimpl:main}, for the query ball $\ballQ =
    \ball{\query}{4\num(1+\eps)}$.  We compute all large balls of
    $\BallSetA$ that intersect $\ballQ$.  Here a large ball is a ball
    of radius $> \num \eps$, and a ball of radius at most $\num \eps$
    is considered to be a small ball. Consider the $O(1/\eps^d)$ grid
    cells of $\GridApproxX{\ballQ}{\eps/16}$.  In $O(1/\eps^d)$ time
    we can record the number of centers of large balls inside any such
    cell.  Clearly, any small ball that intersects $\ball{\query}{4
       \num}$ has its center in some cell of
    $\GridApproxX{\ballQ}{\eps/16}$. We use the quadtree $\QTreeCen$
    to find out exactly the number of centers, $N_\cellA$, of small
    balls in each cell $\cellA$ of $\GridApproxX{\ballQ}{\eps/16}$, by
    finding the total number of centers using $\QTreeCen$, and
    decreasing this by the count of centers of large balls in that
    cell. This can be done in time $O(\log n + 1/\eps^d)$.  We pick an
    arbitrary point in $\cellA$, and assign it weight $N_\cellA$, and
    treat it as representing all the small balls in this grid cell --
    clearly, this introduces an error of size $\leq \eps \num$ in the
    distance of such a ball from $\query$, and as such we can ignore
    it in our argument. In the end of this snapping process, we have
    $O(1/\eps^d)$ weighted points, and $O(1/\eps^d)$ large balls. We
    know the distance of the query point from each one of these
    points/balls. This results in $O(1/\eps^d)$ weighted distances,
    and we want the smallest $\ell$, such that the total weight of the
    distances $\leq \ell$ is at least $k$. This can be done by
    weighted median selection in linear time in the number of
    distances, which is $O(1/\eps^d)$. Once we get the required point
    we can output any ball $\ballA$ corresponding to the
    point. Clearly, $\ballA$ satisfies the required conditions.
\end{proof}

\subsection{The result}

\begin{theorem}
    \thmlab{k:at:query:main}%
    Given a set of $n$ disjoint balls $\BallSetA$ in $\Re^d$, one can
    preprocess them in time $O(n \log n)$ into a data structure of
    size $O(n)$, such that given a query point $\query \in \Re^d$, a
    number $k$ with $1 \leq k \leq n$ and $\eps > 0$, one can find in
    time $O\pth{ \log n + \eps^{-d}}$ a ball $\ballA \in \BallSetA$,
    such that, $(1-\eps)\distPk{\BallSetA}{\query}{k} \leq
    \dist{\query}{\ballA} \leq (1+\eps)
    \distPk{\BallSetA}{\query}{k}$.
\end{theorem}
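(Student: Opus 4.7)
The plan is to obtain the theorem as a direct composition of the two preceding lemmas. During preprocessing, I would build the unified data structure \dsref{everything} on the input set $\BallSetA$ just once; from the statements of \lemref{const:fact:apprx} and \lemref{count:range}, this single structure supports both of their query subroutines, so the total preprocessing cost remains $O(n \log n)$ time and $O(n)$ space.

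Given a query $(\query, k, \eps)$, I would answer it in two stages. First, invoke the algorithm of \lemref{const:fact:apprx} to compute, in $O(\log n)$ time, a number $\num$ satisfying $\num/4 \le \distPk{\BallSetA}{\query}{k} \le 4\num$. Second, feed $\num$, $k$, $\query$, and $\eps$ into the algorithm of \lemref{count:range}; its hypothesis is exactly the guarantee produced by the first stage, so it returns, in additional $O(\log n + 1/\eps^d)$ time, a ball $\ballA \in \BallSetA$ with $(1-\eps)\distPk{\BallSetA}{\query}{k} \le \dist{\query}{\ballA} \le (1+\eps)\distPk{\BallSetA}{\query}{k}$. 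Summing the two stages yields the claimed query time $O(\log n + \eps^{-d})$.

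Because the theorem is essentially a plug-in composition, there is no serious obstacle; the only subtlety worth mentioning is that both subroutines must share the same copy of \dsref{everything} so that the preprocessing cost does not double, which is guaranteed by the construction. Edge cases such as $k = n$, or $\eps$ close to $1$, require no special treatment since both lemmas are stated for all valid $k$ and $\eps \in (0,1)$.
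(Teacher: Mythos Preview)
Your proposal is correct and matches the paper's approach exactly: the theorem is stated without an explicit proof and is meant to follow immediately by composing \lemref{const:fact:apprx} with \lemref{count:range}, both of which run on the single data structure \dsref{everything}. There is nothing to add.
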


\section{Quorum clustering}
\seclab{quorum}

We are given a set $\BallSetA$ of $n$ disjoint balls in $\Re^d$, and
we describe how to compute quorum clustering for them quickly.
\smallskip

Let $\constC$ be some constant.  Let $\BallSetA_0 = \emptyset$. For $i
= 1,\ldots, \nclust$, let $\BallSetX_i = \BallSetA \setminus
(\bigcup_{j=0}^{i-1} \BallSetA_j)$, and let $\ballC_i =
\ball{\ctrC_i}{\radC_i}$ be any ball that satisfies,
\begin{compactenum}[\qquad(A)]
    \item $\ballC_i$ contains $\min(k - \cDim, \cardin{\BallSetX_i})$
    balls of $\BallSetX_i$ completely inside it,
    \item $\ballC_i$ intersects at least $k$ balls of $\BallSetA$, and
    \item the radius of $\ballC_i$ is at most $\constC$ times the
    radius of the smallest ball satisfying the above conditions.
\end{compactenum}
Next, we remove any $k - \cDim$ balls that are contained in $\ballC_i$
from $\BallSetX_i$ to get the set $\BallSetX_{i+1}$. We call the removed
set of balls $\BallSetA_i$. We repeat this
process till all balls are extracted.  Notice that at each step $i$,
we only require that the $\ballC_i$ intersects $k$ balls of
$\BallSetA$ (and not $\BallSetX_i$), but that it must contain $k -
\cDim$ balls from $\BallSetX_i$.  Also, the last quorum ball may
contain fewer balls. The balls $\ballC_1, \ldots, \ballC_{\nclust}$,
are the resulting \emphi{$\constC$-approximate quorum clustering}.

\subsection{Computing an approximate quorum clustering}
\begin{definition}
    For a set $\PntSet$ of $n$ points in $\Re^d$, and an integer
    $\ell$, with $1 \leq \ell \leq n$, let $\minDistPk{\PntSet}{\ell}$
    denote the radius of the smallest ball which contains at least
    $\ell$ points from $\PntSet$, i.e., $\minDistPk{\PntSet}{\ell} =
    \min_{\query \in \Re^d} \distPk{\PntSet}{\query}{\ell}$.

    Similarly, for a set $\BallSetX$ of $n$ balls in $\Re^d$, and an
    integer $\ell$, with $1 \leq \ell \leq n$, let
    $\minBCoverk{\BallSetX}{\ell}$ denote the radius of the smallest
    ball which completely contains at least $\ell$ balls from
    $\BallSetX$.
\end{definition}

\begin{lemma}[\cite{hk-drhrp-12}]
    \lemlab{q:clustering:ref}
    Given a set $\PntSet$ of $n$ points in $\Re^d$ and integer $\ell$,
    with $1 \leq \ell \leq n$, one can compute, in $O(n \log n)$ time,
    a sequence of $\ceiling{n/\ell}$ balls, $\ballD_1 =
    \ball{\ctrD_1}{\radD_1}, \ldots, \ballD_{\ceiling{n/\ell}} =
    \ball{\ctrD_{\ceiling{n/\ell}}}{\radD_{\ceiling{n/\ell}}}$, such
    that, for all $i, 1 \leq i \leq \ceiling{n/\ell}$, we have
    \smallskip
    \begin{compactenum}[\rm \quad(A)]
        \item For every ball $\ballD_i$, there is an associated subset
        $\PntSet_i$ of $\min(\ell, \cardin{\PntSetQ_i})$ points of
        $\PntSetQ_i = \PntSet\setminus \pth[]{ \PntSet_i \cup \ldots
           \cup \PntSet_{i-1}}$, that it covers.
        \item The ball $\ballD_i = \ball{\ctrD_i}{\radD_i}$ is a
        $2$-approximation to the smallest ball covering
        $\min(\ell,\cardin{\PntSetQ_i})$ points in $\PntSetQ_i$; that
        is, $\radD_i/2 \leq
        \minDistPk{\PntSetQ_i}{\min(\ell,\cardin{\PntSetQ_i})} \leq
        \radD_i$.
    \end{compactenum}
\end{lemma}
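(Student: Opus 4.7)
The plan is to compute the sequence by greedy peeling. Set $\PntSetQ_1 = \PntSet$; at iteration $i$, invoke a subroutine on $\PntSetQ_i$ that returns a ball $\ballD_i = \ball{\ctrD_i}{\radD_i}$ which is a $2$-approximation to the smallest ball covering $\ell_i = \min\pth{\ell,\cardin{\PntSetQ_i}}$ points of $\PntSetQ_i$, along with a witness set $\PntSet_i \subseteq \PntSetQ_i \cap \ballD_i$ of size $\ell_i$. Then set $\PntSetQ_{i+1} = \PntSetQ_i \setminus \PntSet_i$ and repeat until $\PntSetQ_{i+1} = \emptyset$. Properties (A) and (B) hold by construction, and the process terminates in exactly $\ceiling{n/\ell}$ rounds since $\cardin{\PntSet_i} = \ell$ in every round except possibly the last.

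For the subroutine, I would use the standard exponential-grid/quadtree technique. The key observation is that if $r^\ast = \minDistPk{\PntSetQ_i}{\ell_i}$ and $\ballA^\ast$ is an optimal covering ball, then $\ballA^\ast$ lies inside a $3 \times 3 \times \cdots \times 3$ block of canonical grid cells of side $\Theta(r^\ast)$ centered on any cell that meets $\ballA^\ast$. Consequently, a constant-factor approximation to $r^\ast$ is found by scanning the dyadic levels of a compressed quadtree for the smallest level at which some $3^d$-block of cells holds $\geq \ell_i$ points of $\PntSetQ_i$; the desired $2$-approximation is then obtained by refining the best such block by the standard doubling trick. The chosen block furnishes both the center $\ctrD_i$ (any point of $\PntSetQ_i$ inside it) and the witness set $\PntSet_i$ of any $\ell_i$ covered points.

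For the running time, I would precompute a compressed quadtree $\QTree$ on $\PntSet$ once, in $O(n\log n)$ time, and augment each node with a counter storing how many points of the \emph{current} $\PntSetQ_i$ lie in its subtree. Removing a point costs $O(\log n)$ to update counters along one root-to-leaf path, so deletions contribute $O(n\log n)$ in total. Each subroutine call can be implemented in $O(\log n + \ell_i)$ time by walking up from relevant leaves and exhibiting the $\ell_i$ witness points; summed over $\ceiling{n/\ell}$ calls, this gives $O\pth{n + (n/\ell)\log n} \subseteq O(n\log n)$.

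The main obstacle is pinning down the $2$-approximation factor with the right geometric constants: the cell side length must be chosen carefully against the $\sqrt{d}$ factor relating side length to diameter (cf.\ \defref{grid:approx}), and one must verify that the $3^d$-neighborhood of a cell of side $\Theta(r^\ast)$ indeed contains every optimal ball whose interior meets that cell. This is a standard quadtree argument, and the full details are in the cited reference \cite{hk-drhrp-12}.
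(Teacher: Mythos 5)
First, note that the paper does not actually prove this lemma: it is imported verbatim from \cite{hk-drhrp-12}, so there is no in-paper argument to compare against. Your greedy-peeling outline (compute a $2$-approximation to the smallest ball covering $\min(\ell,\cardin{\PntSetQ_i})$ points of the surviving set, peel off a witness set of that many covered points, repeat) is indeed the construction behind the cited result, and properties (A) and (B) hold by fiat once the subroutine is granted; the count of $\ceiling{n/\ell}$ rounds is also immediate.

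The genuine gap is the running time. You assert that each call to the subroutine costs $O(\log n + \ell_i)$, ``by walking up from relevant leaves,'' but you never identify which leaves are relevant. Approximating $\minDistPk{\PntSetQ_i}{\ell}$ is a global search: the optimal ball can sit anywhere, so at the correct dyadic level one must in principle inspect every nonempty $3^d$-block of cells, and the correct level is not known in advance. The standard static grid/quadtree algorithm for this problem costs time roughly linear in $\cardin{\PntSetQ_i}$ per invocation (after preprocessing), and rerunning it in each of the $\ceiling{n/\ell}$ rounds gives $O(n^2/\ell)$ rather than $O(n\log n)$ --- already $n^{3/2}$ for $\ell=\sqrt{n}$. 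The entire technical content of the cited lemma is the amortization across rounds: one exploits that $\minDistPk{\PntSetQ_i}{\ell}$ is nondecreasing in $i$ (deleting points only worsens the optimum), sweeps the quadtree levels monotonically over the whole sequence of rounds, and charges the work of re-examining a block to the points deleted from it. Your writeup defers exactly this charging argument to \cite{hk-drhrp-12}, so as it stands it restates the lemma and its intended algorithm rather than establishing the $O(n\log n)$ bound.
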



The algorithm to construct an approximate quorum clustering is as
follows.  We use the algorithm of \lemref{q:clustering:ref} with the
set of points $\PntSet = \CenterSetA$, and $\ell = k - \cDim$ to get a
list of $\nclust = \ceiling{n/(k - \cDim)}$ balls $\ballD_1 =
\ball{\ctrD_1}{\radD_1}, \ldots, \ballD_\nclust =
\ball{\ctrD_\nclust}{\radD_\nclust}$, satisfying the conditions of
\lemref{q:clustering:ref}. Next we use the algorithm of
\thmref{k:at:query:main}, to compute $(k,\eps)$-ANN distances from the
centers $\ctrD_1, \ldots, \ctrD_\nclust$, to the balls of $\BallSetA$.

Thus, we get numbers $\numA_i$ satisfying, $(1/2)
\distPk{\BallSetA}{\ctrD_i}{k} \leq \numA_i \leq (3/2)
\distPk{\BallSetA}{\ctrD_i}{k}$.  Let $\numB_i = \max(2 \numA_i, 3
\radD_i)$, for $i=1,\ldots,\nclust$.  Sort $\numB_1, \ldots,
\numB_{\nclust}$ (we assume for the sake of simplicity of exposition
that $\numB_{\nclust}$, being the radius of the last cluster is the
largest number).  Suppose the sorted order is the permutation $\pi$ of
$\brc{1,\ldots, \nclust}$ (by assumption $\pi(\nclust) = \nclust$).
We output the balls $\ballC_i =
\ball{\ctrD_{\pi(i)}}{\numB_{\pi(i)}}$, for $i = 1,\ldots, \nclust$,
as the approximate quorum clustering.

\subsection{Correctness}
\begin{lemma}
    \lemlab{times:three}
    Let $\BallSetA = \brc{\ballA_1, \ldots, \ballA_n}$ be a set of $n$
    disjoint balls, where $\ballA_i = \ball{\ctrA_i}{\radA_i}$, for $i
    = 1, \ldots, n$.  Let $\CenterSetA = \brc{\ctrA_1,\ldots,
       \ctrA_n}$ be the set of centers of these balls.  Let $\ballA =
    \ball{\ctrA}{\radA}$ be any ball that contains at least $\ell$
    centers from $\CenterSetA$, for some $2 \leq \ell \leq n$.  Then
    $\ball{\ctrA}{3 \radA}$ contains the $\ell$ balls that correspond
    to those centers.
\end{lemma}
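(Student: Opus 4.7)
The plan is to show that each individual ball $\ballA_i$ whose center lies in $\ball{\ctrA}{\radA}$ has radius $\radA_i \leq 2\radA$; then the triangle inequality immediately gives $\ballA_i \subseteq \ball{\ctrA}{3\radA}$.

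First I would fix any index $i$ with $\ctrA_i \in \ball{\ctrA}{\radA}$. Since $\ell \geq 2$, there is a second index $j \neq i$ with $\ctrA_j \in \ball{\ctrA}{\radA}$. The triangle inequality gives $\distE{\ctrA_i}{\ctrA_j} \leq 2\radA$. On the other hand, the balls $\ballA_i$ and $\ballA_j$ are disjoint (by the standing assumption on $\BallSetA$), which forces $\distE{\ctrA_i}{\ctrA_j} \geq \radA_i + \radA_j \geq \radA_i$. Combining these two inequalities yields $\radA_i \leq 2\radA$.

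Next I would apply the triangle inequality once more: for any point $\pnt \in \ballA_i$,
\begin{math}
   \distE{\pnt}{\ctrA} \leq \distE{\pnt}{\ctrA_i} + \distE{\ctrA_i}{\ctrA} \leq \radA_i + \radA \leq 2\radA + \radA = 3\radA,
\end{math}
so $\ballA_i \subseteq \ball{\ctrA}{3\radA}$. Since this holds for every one of the (at least) $\ell$ indices with centers in $\ball{\ctrA}{\radA}$, the conclusion of the lemma follows.

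The only mild subtlety is the use of the hypothesis $\ell \geq 2$, which is needed to produce the ``witness'' ball $\ballA_j$ that bounds $\radA_i$; without a second ball, disjointness gives no constraint and $\radA_i$ could be arbitrarily large relative to $\radA$. I do not anticipate any other obstacle, as the argument is a direct combination of disjointness and two applications of the triangle inequality.
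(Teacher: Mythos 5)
Your proof is correct and follows essentially the same argument as the paper: bound $\distE{\ctrA_i}{\ctrA_j} \leq 2\radA$ via the two centers in $\ball{\ctrA}{\radA}$, use disjointness to get $\radA_i \leq 2\radA$, and conclude by the triangle inequality. Your remark on why $\ell \geq 2$ is needed matches the paper's use of the hypothesis as well.
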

\begin{proof}
    Without loss of generality suppose $\ballA$ contains the $\ell$
    centers $\ctrA_1, \ldots, \ctrA_\ell$, from $\CenterSetA$. Now
    consider any index $i$ with $1 \leq i \leq \ell$, and consider any
    $j \neq i$, which exists as $\ell \geq 2$ by assumption. Since
    $\ball{\ctrA}{\radA}$ contains both $\ctrA_i$ and $\ctrA_j$,
    $2 \radA \geq \norm{\ctrA_i - \ctrA_j}$ by the triangle
    inequality.  On the other hand, as the balls $\ballA_i$ and
    $\ballA_j$ are disjoint we have that $\norm{\ctrA_i - \ctrA_j}
    \geq \radA_i + \radA_j \geq \radA_i$.  It follows that $\radA_i
    \leq 2 \radA$ for all $1 \leq i \leq \ell$.  As such the ball
    $\ball{\ctrA}{3 \radA}$ must contain the entire ball $\ballA_i$,
    and thus it contains all the $\ell$ balls $\ballA_1, \ldots,
    \ballA_\ell$, corresponding to the centers.
\end{proof}

\begin{lemma}
    \lemlab{balls:packing}%
    Let $\BallSetA = \brc{\ballA_1 = \ball{\ctrA_1}{\radA_1}, \ldots,
       \ballA_n = \ball{\ctrA_n}{\radA_n}}$ be a set of $n$ disjoint
    balls in $\Re^d$.  Let $\CenterSetA = \brc{\ctrA_1,\ldots,
       \ctrA_n}$ be the corresponding set of centers, and let $\ell$
    be an integer with $2 \leq \ell \leq n$. Then,
    $\minDistPk{\CenterSetA}{\ell} \leq \minBCoverk{\BallSetA}{\ell}
    \leq 3 \minDistPk{\CenterSetA}{\ell} $.
\end{lemma}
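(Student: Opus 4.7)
The plan is to establish the two inequalities separately, both directly from the definitions, with the right-hand inequality reducing immediately to the preceding \lemref{times:three}.

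For the left inequality $\minDistPk{\CenterSetA}{\ell} \leq \minBCoverk{\BallSetA}{\ell}$, I would take $\ballA = \ball{\ctrA}{\radA}$ to be a smallest ball that completely contains $\ell$ balls of $\BallSetA$, so $\radA = \minBCoverk{\BallSetA}{\ell}$. Since each of these $\ell$ balls is contained in $\ballA$, its center lies in $\ballA$ as well. Thus $\ballA$ contains $\ell$ points of $\CenterSetA$, which by definition implies $\minDistPk{\CenterSetA}{\ell} \leq \radA$.

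For the right inequality $\minBCoverk{\BallSetA}{\ell} \leq 3 \minDistPk{\CenterSetA}{\ell}$, I would take $\ballA = \ball{\ctrA}{\radA}$ to be a smallest ball that contains $\ell$ points of $\CenterSetA$, so $\radA = \minDistPk{\CenterSetA}{\ell}$. Since $\ell \geq 2$, I can apply \lemref{times:three} to conclude that $\ball{\ctrA}{3\radA}$ completely contains the $\ell$ balls of $\BallSetA$ whose centers lie in $\ballA$. Hence $\minBCoverk{\BallSetA}{\ell} \leq 3\radA$, yielding the claim.

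Neither direction presents a real obstacle; the only content is the observation (already packaged as \lemref{times:three}) that disjointness of the input balls, combined with the triangle inequality, forces each of them to have radius at most $2\radA$, so expanding by a factor of $3$ around $\ctrA$ suffices to swallow them. The boundary case $\ell = 1$ is excluded by hypothesis and indeed could fail, since a single ball of arbitrarily large radius centered at a point cannot be covered by a tiny ball around that point; so I would make sure to invoke \lemref{times:three} only under $\ell \geq 2$, which is exactly the hypothesis stated here.
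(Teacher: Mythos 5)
Your proof is correct and follows essentially the same route as the paper: the first inequality is immediate because the optimal covering ball contains the $\ell$ centers, and the second follows by applying \lemref{times:three} (valid since $\ell \geq 2$) to the ball realizing $\minDistPk{\CenterSetA}{\ell}$. Your remark about why $\ell = 1$ must be excluded is a nice sanity check, though not needed for the argument.
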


\begin{proof}
    The first inequality follows since the ball realizing the optimal
    covering of $\ell$ balls, clearly contains their centers as well,
    and therefore $\ell$ points from $\CenterSetA$.  To see the second
    inequality, consider the ball $\ballA = \ball{\ctrA}{\radA}$
    realizing $\minDistPk{\CenterSetA}{\ell}$, and use
    \lemref{times:three} on it. This implies
    $\minBCoverk{\BallSetA}{\ell} \leq 3
    \minDistPk{\CenterSetA}{\ell}$.
\end{proof}

\begin{lemma}
    \lemlab{q:clustering:corr}
    The balls $\ballC_1, \ldots \ballC_\nclust$ computed above are a
    $12$-approximate quorum clustering of $\BallSetA$.
\end{lemma}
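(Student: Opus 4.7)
The plan is to verify the three requirements (A), (B), (C) of a $12$-approximate quorum clustering, taking $\BallSetA_i$ to be the set of balls whose centers lie in $\PntSet_{\pi(i)}$. With this choice, $\BallSetX_i$ consists of those balls whose centers lie in $\CenterSetA \setminus (\PntSet_{\pi(1)} \cup \cdots \cup \PntSet_{\pi(i-1)})$. Since $\PntSet_1, \ldots, \PntSet_\nclust$ partition $\CenterSetA$ and the assumption $\pi(\nclust) = \nclust$ places the ``remainder'' cluster last in the output order, an elementary counting argument gives $\cardin{\BallSetA_i} = \min(k - \cDim, \cardin{\BallSetX_i})$ for every $i$.

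Conditions (A) and (B) are quick. Applying \lemref{times:three} to the ball $\ball{\ctrD_{\pi(i)}}{\radD_{\pi(i)}}$, together with the inequality $\numB_{\pi(i)} \geq 3 \radD_{\pi(i)}$, yields $\ballC_i \supseteq \ball{\ctrD_{\pi(i)}}{3 \radD_{\pi(i)}} \supseteq \BallSetA_i$, which gives (A). Condition (B) follows at once from $\numB_{\pi(i)} \geq 2 \numA_{\pi(i)} \geq \distPk{\BallSetA}{\ctrD_{\pi(i)}}{k}$, so $\ballC_i$ meets at least $k$ balls of $\BallSetA$.

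The crux is (C): writing $\radA^*_i$ for the radius of the smallest ball meeting both (A) and (B) at step $i$, I want $\numB_{\pi(i)} \leq 12 \radA^*_i$. The strategy is to exhibit, for each $i$, a single index $j^* \notin \{\pi(1), \ldots, \pi(i-1)\}$ with $\numB_{j^*} \leq 12 \radA^*_i$; since the output is sorted by $\numB$, this will give $\numB_{\pi(i)} \leq \numB_{j^*} \leq 12 \radA^*_i$. Fix an optimal $B^* = \ball{\ctrA^*}{\radA^*_i}$, let $C^*$ be the set of centers of the $\min(k - \cDim, \cardin{\BallSetX_i})$ balls of $\BallSetX_i$ that lie inside $B^*$, and define $j^*$ to be the smallest index with $\PntSet_{j^*} \cap C^* \neq \emptyset$. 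Since $C^*$ consists of centers of balls in $\BallSetX_i$, it is disjoint from each $\PntSet_{\pi(l)}$ with $l < i$, so $j^* \notin \{\pi(1), \ldots, \pi(i-1)\}$; minimality of $j^*$ then forces $C^* \subseteq \PntSetQ_{j^*}$. Using $B^*$ as a witness inside $\PntSetQ_{j^*}$, \lemref{q:clustering:ref} gives $\radD_{j^*} \leq 2 \radA^*_i$. Picking any $\ctrA_l \in \PntSet_{j^*} \cap C^*$, the point $\ctrA_l$ lies in both $\ballD_{j^*}$ and $B^*$, so the triangle inequality yields $\distE{\ctrD_{j^*}}{\ctrA^*} \leq \radD_{j^*} + \radA^*_i \leq 3 \radA^*_i$. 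The $1$-Lipschitz property (\obsref{lipschitz}) then gives $\distPk{\BallSetA}{\ctrD_{j^*}}{k} \leq \distPk{\BallSetA}{\ctrA^*}{k} + 3 \radA^*_i \leq 4 \radA^*_i$, whence $\numA_{j^*} \leq 6 \radA^*_i$ and $\numB_{j^*} = \max(2\numA_{j^*}, 3\radD_{j^*}) \leq 12 \radA^*_i$, as required.

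The hard part is the delicate choice of $j^*$: its minimality in the \emph{center} quorum clustering order is what guarantees $C^* \subseteq \PntSetQ_{j^*}$, which controls $\radD_{j^*}$ by $\radA^*_i$, while the non-empty intersection $\PntSet_{j^*} \cap C^*$ is simultaneously what keeps $\ctrD_{j^*}$ within $3\radA^*_i$ of $\ctrA^*$. The boundary case $i = \nclust$, where $\cardin{\BallSetX_\nclust}$ may be strictly less than $k - \cDim$, goes through by the same argument: the assumption $\pi(\nclust) = \nclust$ forces $C^* = \PntSetQ_\nclust = \PntSet_\nclust$ and hence $j^* = \nclust$, and the estimates above apply verbatim.
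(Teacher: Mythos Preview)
Your proof is correct and follows essentially the same approach as the paper: assign $\BallSetA_i$ to be the balls with centers in $\PntSet_{\pi(i)}$, verify (A) via \lemref{times:three} and (B) via the $2\numA$ term, and for (C) locate the minimum center-quorum index $j^*$ whose assigned set meets $C^*$, bound $\radD_{j^*}$ and $\distPk{\BallSetA}{\ctrD_{j^*}}{k}$ through the common center and the Lipschitz property, and finish using the sorted order of the $\numB$ values. Your phrasing of the key step (``minimality of $j^*$ forces $C^* \subseteq \PntSetQ_{j^*}$'') is in fact slightly crisper than the paper's, which speaks of $\ballD_p$ ``containing'' a center and leaves the assignment-versus-geometric-containment distinction implicit.
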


\begin{proof}
    Consider the balls $\ballD_1 = \ball{\ctrD_1}{\radD_1}, \ldots,
    \ballD_\nclust=\ball{\ctrD_\nclust}{\radD_\nclust}$ computed by
    the algorithm of \lemref{q:clustering:ref}. Suppose
    $\CenterSetA_i$, for $1 = 1,\ldots, \nclust$, is the set of
    centers assigned to the balls $\ballA_i$. That is $\CenterSetA_1,
    \ldots, \CenterSetA_\nclust$ form a disjoint decomposition of
    $\CenterSetA$, each of size $k - \cDim$ (except for the last set,
    which might be smaller -- a technicality that we ignore for the
    sake of simplicity of exposition).

    For $i = 1, \ldots, \nclust$, let $\BallSetA_i$ denote the set of
    balls corresponding to the centers in $\CenterSetA_i$.  Now while
    constructing the approximate quorum clusters we are going to
    assign the set of balls $\BallSetA_{\pi(i)}$ for
    $i=1,\ldots,\nclust$, to $\ballC_i$. Now, fix a $i$ with $1 \leq i
    \leq \nclust - 1$. The balls of $\bigcup_{j=1}^{i}
    \BallSetA_{\pi(j)}$ have been used up. Consider an optimal ball,
    i.e., a ball $\ballA = \ball{\ctrA}{\radA}$ that contains
    completely $k - \cDim$ balls among $\bigcup_{j=i+1}^{\nclust}
    \BallSetA_{\pi(j)}$ and intersects $k$ balls from $\BallSetA$, and
    is the smallest such possible.  Fix some $k - \cDim$ balls from
    $\bigcup_{j=i+1}^{\nclust} \BallSetA_{\pi(j)}$ that this optimal
    ball contains. Consider the sets of centers $\CenterSetA'$ of
    these balls. The quorum clusters $\ballD_{\pi(j)}$ for $j =
    i+1,\ldots,\nclust$, contain all these centers, by
    construction. Out of these indices, i.e., out of the indices
    $\brc{\pi(i+1),\ldots,\pi(\nclust)}$, suppose $p$ is the minimum
    index such that $\ballD_p$ contains one of these centers.  When
    $\ballD_p$ was constructed, i.e., at the $p$\th iteration of the
    algorithm of \lemref{q:clustering:ref}, all the centers from
    $\CenterSetA'$ were available. Now since the optimal ball
    $\ballA=\ball{\ctrA}{\radA}$ contains $k - \cDim$ available
    centers too, it follows that $\radD_p \leq 2 \radA$ since
    \lemref{q:clustering:ref} guarantees this. Since $k - \cDim \geq
    2$, by \lemref{times:three}, $\ball{\ctrD_p}{3 \radD_p}$ contains
    the balls of $\BallSetA_p$. Moreover, by the Lipschitz property,
    see \obsref{lipschitz}, it follows that
    $\distPk{\BallSetA}{\ctrD_p}{k} \leq \distPk{\BallSetA}{\ctrA}{k}
    + \norm{\ctrD_p - \ctrA} \leq \radA + (\radA + \radD_p) \leq 4
    \radA$, where the second last inequality follows as the balls
    $\ballA=\ball{\ctrA}{\radA}$ and the ball $\ballD_p =
    \ball{\ctrD_p}{\radD_p}$ intersect. Therefore, for the index $p$
    we have that, $\distPk{\BallSetA}{\ctrD_p}{k} \leq 2 \numA_p \leq
    3 \distPk{\BallSetA}{\ctrD_p}{k} \leq 12 \radA$, and also that $3
    \radD_p \leq 6 \radA$. As such $\numB_p = \max(2 \numA_p, 3
    \radD_p) \leq 12 \radA$. The index $\pi(i+1)$ minimizes this
    quantity among the indices $\brc{\pi(i+1),\ldots,\pi(m)}$ (as we
    took the sorted order), as such it follows that $\numB_{i+1} \leq
    12 \radA$.
\end{proof}

\begin{lemma}
    \lemlab{q:clustering}
    Given a set $\BallSetA$ of $n$ disjoint balls in $\Re^d$, such
    that $(k-\cDim) | n$, and a number $k$ with $2 \cDim < k \leq n$,
    in $O(n \log n)$ time, one can output a sequence of $\nclust = {n
       / (k - \cDim)}$ balls $\ballC_1, \ldots, \ballC_\nclust$, such
    that
    \begin{compactenum}[\quad(A)]
        \item For each ball $\ballC_i$, there is an associated subset
        $\BallSetA_i$ of $k - \cDim$ balls of $\BallSetX_i = \BallSetA
        \setminus (\BallSetA_1 \cup \ldots \cup \BallSetA_{i-1})$,
        that it completely covers.
        \item The ball $\ballC_i$ intersects at least $k$ balls from
        $\BallSetA$.

        \item The radius of the ball $\ballC_i$ is at most $12$ times
        that of the smallest ball covering $k - \cDim$ balls of
        ${\BallSetX_i}$ completely, and intersecting $k$ balls of
        $\BallSetA$.
    \end{compactenum}
\end{lemma}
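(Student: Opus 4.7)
The plan is to verify that the algorithm described just before \lemref{q:clustering:corr} satisfies all three required properties in $O(n \log n)$ time. The hardest property---the $12$-approximation in (C)---is exactly the content of \lemref{q:clustering:corr}, so I would treat it as a black box and focus on confirming (A), (B), and the running time.

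For property (A), I would start from the output of \lemref{q:clustering:ref} applied to $\CenterSetA$ with $\ell = k - \cDim$. Under the divisibility assumption $(k - \cDim) \, | \, n$, this produces a disjoint partition $\CenterSetA_1, \ldots, \CenterSetA_\nclust$ of the centers into blocks of size exactly $k - \cDim$, together with balls $\ballD_i$ whose radii $2$-approximate the smallest ball covering the $i$-th block. Letting $\BallSetA_{\pi(i)}$ be the input balls whose centers form $\CenterSetA_{\pi(i)}$, \lemref{times:three} (which applies because $k - \cDim \geq 2$, thanks to $k > 2\cDim$) shows that $\ball{\ctrD_{\pi(i)}}{3\radD_{\pi(i)}}$ covers $\BallSetA_{\pi(i)}$ completely. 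Since $\numB_{\pi(i)} \geq 3\radD_{\pi(i)}$ by definition, the output ball $\ballC_i$ does too. The $\CenterSetA_j$'s are pairwise disjoint by construction, hence so are the $\BallSetA_{\pi(j)}$'s, and therefore $\BallSetA_{\pi(i)} \subseteq \BallSetX_i$ regardless of the permutation $\pi$.

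For (B), the constant-factor $(k,\eps)$-ANN guarantee of \thmref{k:at:query:main} provides $\numA_i \geq (1/2)\distPk{\BallSetA}{\ctrD_i}{k}$; combined with $\numB_{\pi(i)} \geq 2\numA_{\pi(i)}$ this forces the radius of $\ballC_i$ to be at least $\distPk{\BallSetA}{\ctrD_{\pi(i)}}{k}$, so $\ballC_i$ intersects at least $k$ balls of $\BallSetA$. The running time aggregates cleanly: $O(n \log n)$ for \lemref{q:clustering:ref}; $O(n \log n)$ to build and $O(\nclust \log n) = O(n \log n)$ to perform the $\nclust$ ANN queries to the structure of \thmref{k:at:query:main} (taking $\eps$ to be a fixed constant, say $1/2$); and $O(\nclust \log \nclust)$ to sort the $\numB_i$'s.

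The main subtlety to watch is that the output is produced in the sorted order of the $\numB_i$'s, which generally differs from the order in which \lemref{q:clustering:ref} generated the blocks, and yet property (A) refers to the ``residual'' set $\BallSetX_i$. The reason the reordering still works is that the $\CenterSetA_j$'s form a genuine disjoint partition of $\CenterSetA$, so extracting the associated balls in any permutation leaves a valid residual at each step. The hypotheses $(k - \cDim) \, | \, n$ and $k > 2\cDim$ are precisely what is needed to avoid a degenerate final block and to keep \lemref{times:three} applicable throughout.
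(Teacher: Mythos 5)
Your proposal is correct and follows the paper's own route: the paper likewise delegates the $12$-approximation (property (C)) to \lemref{q:clustering:corr} and observes that the running time is dominated by \lemref{q:clustering:ref}. Your explicit verification of (A) and (B) --- via \lemref{times:three}, the choice $\numB_i = \max(2\numA_i, 3\radD_i)$, and the observation that reordering a disjoint partition preserves the residual-set property --- merely fills in details the paper leaves implicit, and does so correctly.
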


\begin{proof}
    The correctness was proved in \lemref{q:clustering:corr}.  To see
    the time bound is also easy as the computation time is dominated
    by the time in \lemref{q:clustering:ref}, which is $O(n \log n)$.
\end{proof}

\section{Construction of the sublinear space data structure for
   $(k,\eps)$-\ANN}
\seclab{avd}

Here we show how to compute an approximate Voronoi diagram for
approximating the $k$\th-nearest ball, that takes $O(n/k)$ space. We
assume $k > 2 \cDim$ without loss of generality, and we let $\nclust =
\ceiling{n / (k - \cDim)} = O(n/k)$. Here $k$ and $\eps$ are
prespecified in advance.

\subsection{Preliminaries}

The following notation was introduced in \cite{hk-drhrp-12}.  A ball
$\ballA$ of radius $\radA$ in $\Re^d$, centered at a point $\ctrA$,
can be interpreted as a point in $\Re^{d+1}$, denoted by
$\mapped{\ballA} = \pth[]{ \ctrA, \radA}$. For a regular point $\pnt
\in \Re^d$, its corresponding image under this transformation is the
\emphi{mapped} point $\mapped{\pnt} = \pth[]{\pnt, 0 } \in \Re^{d+1}$,
i.e., we view it as a ball of radius $0$ and use the mapping defined
on balls.  Given point $\pntA = \pth{\pntA_1,\dots,\pntA_d} \in \Re^d$
we will denote its Euclidean norm by $\norm{\pntA}$.  We will consider
a point $\pntA = \pth{\pntA_1, \pntA_2,\dots, \pntA_{d+1}} \in
\Re^{d+1}$ to be in the product metric of $\Re^d \times \Re$ and
endowed with the product metric norm
\begin{align*}
    \normP{\pntA} = \sqrt{\pntA_1^2 + \dots + \pntA_d^2} +
    \abs{\pntA_{d+1}}.
\end{align*}
It can be verified that the above defines a norm, and for any $\pntA
\in \Re^{d+1}$ we have $\norm{\pntA} \leq \normP{\pntA} \leq \sqrt{2}
\norm{\pntA}$.

\subsection{Construction}

The input is a set $\BallSetA$ of $n$ disjoint balls in $\Re^d$, and
parameters $k$ and $\eps$.

The construction of the data structure is similar to the construction
of the $k$\th-nearest neighbor data structure from the authors' paper
\cite{hk-drhrp-12}.  We compute, using \lemref{q:clustering}, a
$\constC$-approximate quorum clustering of $\BallSetA$ with $\nclust =
{n / (k - \cDim)} = O(n/k)$ balls, $\QClustBalls=\brc{\ballC_1 =
   \ball{\ctrC_1}{\radC_1}, \ldots, \ballC_\nclust =
   \ball{\ctrC_\nclust}{\radC_\nclust}}$, where $\constC \leq 12$.
The algorithm then continues as follows:
\begin{compactenum}[(A)]
    \item Compute an exponential grid around each quorum
    cluster. Specifically, let
    \begin{align}
        \ds \CellSetA =\, \bigcup_{i = 1}^{\nclust} \;\;\bigcup_{j
           =0}^{ \ceiling{\log\pth[]{32\constC/\eps}}}
        \GridApproxX{\ball{\ctrC_i}{2^j \radC_i} }{
           \frac{\eps}{\constA}}%
        \eqlab{clusters:around:q}
    \end{align}
    be the set of grid cells covering the quorum clusters and their
    immediate environ, where $\constA$ is a sufficiently large
    constant (say, $\constA = 256 \constC$).
    
    \item Intuitively, $\CellSetA$ takes care of the region of space
    immediately next to a quorum cluster%
    \footnote{That is, intuitively, if the query point falls into one
       of the grid cells of $\CellSetA$, we can answer a query in
       constant time.}.  For the other regions of space, we can apply
    a construction of an approximate Voronoi diagram for the centers
    of the clusters (the details are somewhat more involved). To this
    end, lift the quorum clusters into points in $\Re^{d+1}$, as
    follows
    \begin{align*}
        \mapped{\QClustBalls} = \brc{\mapped{\ballC_1}, \dots,
           \mapped{\ballC_{\nclust}}},
    \end{align*}
    where $\mapped{\ballC_i} = \pth{\ctrC_i,\radC_i} \in \Re^{d+1}$,
    for $i=1,\ldots, \nclust$.  Note that all points in
    $\mapped{\QClustBalls}$ belong to $\mapped{\Scube} = [0,1]^{d+1}$
    by \assumpref{all:in:cube}.  Now build a $(1+\eps/8)$-\AVD for
    $\mapped{\QClustBalls}$ using the algorithm of Arya and Malamatos
    \cite{am-lsavd-02}, for distances specified by the $\normP{\cdot}$
    norm. The \AVD construction provides a list of canonical cubes
    covering $[0,1]^{d+1}$ such that in the smallest cube containing
    the query point, the associated point of $\mapped{\QClustBalls}$,
    is a $(1+\eps/8)$-\ANN to the query point. (Note that these cubes
    are not necessarily disjoint. In particular, the smallest cube
    containing the query point $\query$ is the one that determines the
    assigned approximate nearest neighbor to $\query$.)

    Clip this collection of cubes to the hyperplane $x_{d+1} = 0$
    (i.e., throw away cubes that do not have a face on this
    hyperplane). For a cube $\cellA$ in this collection, denote by
    $\repX{\cellA}$, the point of $\mapped{\QClustBalls}$ assigned to
    it.  Let $\CellSetC$ be this resulting set of canonical
    $d$-dimensional cubes.

    \item Let $\CellSetAVD$ be the space decomposition resulting from
    overlaying the two collection of cubes, i.e. $\CellSetA$ and
    $\CellSetC$.  Formally, we compute a compressed quadtree $\QTree$
    that has all the canonical cubes of $\CellSetA$ and $\CellSetC$ as
    nodes, and $\CellSetAVD$ is the resulting decomposition of space
    into cells. One can overlay two compressed quadtrees representing
    the two sets in linear time \cite{bhst-sqgqi-10, h-gaa-11}.  Here,
    a cell associated with a leaf is a canonical cube, and a cell
    associated with a compressed node is the set difference of two
    canonical cubes. Each node in this compressed quadtree contains
    two pointers -- to the smallest cube of $\CellSetA$, and to the
    smallest cube of $\CellSetC$, that contains it. This information
    can be computed by doing a \BFS on the tree.
    
    For each cell $\cellA \in \CellSetAVD$ we store the following.
    \begin{compactenum}[\qquad(I)]
        \item An arbitrary representative point $\repres{\cellA} \in
        \cellA$.
        
        \item The point $\repX{\cellA} \in \mapped{\QClustBalls}$ that
        is associated with the smallest cell of $\CellSetC$ that
        contains this cell. We also store an arbitrary ball,
        $\ballrepX{\cellA} \in \BallSetA$, that is one of the balls
        completely inside the cluster specified by $\repX{\cellA}$ --
        we assume we stored such a ball inside each quorum cluster,
        when it was computed.
        
        \item A number $\adknn{\repres{\cellA}}$ that satisfies
        $\distPk{\BallSetA}{\repres{\cellA}}{k} \leq
        \adknn{\repres{\cellA}} \leq
        (1+\eps/4)\distPk{\BallSetA}{\repres{\cellA}}{k}$, and a ball
        $\kbannrepX{\repres{\cellA}} \in \BallSetA$ that realizes this
        distance. In order to compute $\adknn{\repres{\cellA}}$ and
        $\kbannrepX{\repres{\cellA}}$ use the data structure of
        \secref{k:at:query}, see \thmref{k:at:query:main}.
    \end{compactenum}
\end{compactenum}

\subsection{Answering a query}

Given a query point $\query$, compute the leaf cell (equivalently the
smallest cell) in $\CellSetAVD$ that contains $\query$ by performing a
point-location query in the compressed quadtree $\QTree$.  Let
$\cellA$ be this cell.  Let,
\begin{align}
    \estX = \min \pth{\normP{\mapped{\query} - \repX{\cellA}},
       \adknn{\repres{\cellA}} + \norm{\query - \repres{\cellA}} }.
    \eqlab{in:cell}
\end{align}
If $\diameter{\cellA} \leq (\eps/8) \estX$ we return
$\kbannrepX{\repres{\cellA}}$ as the approximate $k$\th-nearest
neighbor, else we return $\ballrepX{\cellA}$.
\subsection{Correctness}

\begin{lemma}
    \lemlab{triv:ub} The number $\estX = \min
    \pth{\normP{\mapped{\query} - \repX{\cellA}},
       \adknn{\repres{\cellA}} + \norm{\query - \repres{\cellA}} }$
    satisfies, $\distPk{\BallSetA}{\query}{k} \leq \estX$.
\end{lemma}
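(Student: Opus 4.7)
\medskip
\noindent\textbf{Proof proposal.} I would prove the bound by establishing each of the two quantities on the right-hand side is individually an upper bound on $\distPk{\BallSetA}{\query}{k}$, so that their minimum is as well.

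For the second term, the argument is immediate from the $1$-Lipschitz property of the $k$\th nearest neighbor distance function (\obsref{lipschitz}): we have
\[
\distPk{\BallSetA}{\query}{k} \leq \distPk{\BallSetA}{\repres{\cellA}}{k} + \norm{\query - \repres{\cellA}} \leq \adknn{\repres{\cellA}} + \norm{\query - \repres{\cellA}},
\]
using that $\adknn{\repres{\cellA}}$ is a $(1+\eps/4)$-approximate upper bound on $\distPk{\BallSetA}{\repres{\cellA}}{k}$ by construction.

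The first term requires unwrapping the lifting construction. The point $\repX{\cellA} \in \mapped{\QClustBalls}$ corresponds to some quorum cluster $\ballC_i = \ball{\ctrC_i}{\radC_i}$, and is the lifted point $(\ctrC_i, \radC_i) \in \Re^{d+1}$, while $\mapped{\query} = (\query, 0)$. By the definition of the product norm $\normP{\cdot}$, we therefore have
\[
\normP{\mapped{\query} - \repX{\cellA}} = \norm{\query - \ctrC_i} + \radC_i.
\]
Now I would invoke the defining property of quorum clustering (\lemref{q:clustering}): the ball $\ballC_i$ intersects at least $k$ balls of $\BallSetA$. For any such ball $\ballA \in \BallSetA$, pick a point $\pnt \in \ballA \cap \ballC_i$; then $\dist{\query}{\ballA} \leq \norm{\query - \pnt} \leq \norm{\query - \ctrC_i} + \norm{\ctrC_i - \pnt} \leq \norm{\query - \ctrC_i} + \radC_i$. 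Thus there are at least $k$ balls of $\BallSetA$ within distance $\norm{\query - \ctrC_i} + \radC_i$ of $\query$, which gives $\distPk{\BallSetA}{\query}{k} \leq \normP{\mapped{\query} - \repX{\cellA}}$.

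Combining both inequalities yields $\distPk{\BallSetA}{\query}{k} \leq \estX$, as desired. There is no real obstacle here; the lemma is essentially a sanity check that both candidate estimates used at query time are safe (never underestimate the true distance), with the first estimate relying on the $k$-covering property of the quorum cluster and the second on Lipschitz continuity.
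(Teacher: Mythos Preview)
Your proof is correct and follows the same idea as the paper, which simply says ``This follows by the Lipschitz property, see \obsref{lipschitz}.'' You have spelled out the details that the paper leaves implicit: for the second term the Lipschitz bound is immediate, and for the first term your direct triangle-inequality argument via a point $\pnt \in \ballA \cap \ballC_i$ is equivalent to applying Lipschitz at $\ctrC_i$ together with the quorum property $\distPk{\BallSetA}{\ctrC_i}{k} \leq \radC_i$.
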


\begin{proof}
    This follows by the Lipschitz property, see \obsref{lipschitz}.
\end{proof}

\begin{lemma}
    \lemlab{small:cell}%
    Let $\cellA \in \CellSetAVD$ be any cell containing $\query$. If
    $\diameter{\cellA} \leq \eps \distPk{\BallSetA}{\query}{k}/4$,
    then $\kbannrepX{\repres{\cellA}}$ is a valid $(1 \pm
    \eps)$-approximate $k$\th-nearest neighbor of $\query$.
\end{lemma}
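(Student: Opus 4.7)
The plan is to combine the Lipschitz property of $\distPk{\BallSetA}{\cdot}{k}$ (\obsref{lipschitz}) with the fact that $\dist{\cdot}{\ballA}$ is $1$-Lipschitz in its first argument for any fixed ball $\ballA$. I will use the smallness of the cell diameter to transfer both the true $k$\th nearest neighbor distance and the distance to the stored representative ball between $\query$ and $\repres{\cellA}$, with only an additive $\diameter{\cellA}$ error each time.

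\textbf{Step 1 (key inequality on cell size).} Since both $\query$ and $\repres{\cellA}$ lie in $\cellA$, we have $\norm{\query - \repres{\cellA}} \leq \diameter{\cellA} \leq \eps \distPk{\BallSetA}{\query}{k}/4$. Abbreviate $\ballA = \kbannrepX{\repres{\cellA}}$ and $\kdist = \distPk{\BallSetA}{\query}{k}$. By the stored guarantee, $\dist{\repres{\cellA}}{\ballA} = \adknn{\repres{\cellA}}$ with
\[
\distPk{\BallSetA}{\repres{\cellA}}{k} \leq \adknn{\repres{\cellA}} \leq (1+\eps/4)\distPk{\BallSetA}{\repres{\cellA}}{k}.
\]

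\textbf{Step 2 (upper bound).} I plan to apply the $1$-Lipschitz property of $\dist{\cdot}{\ballA}$ to write $\dist{\query}{\ballA} \leq \adknn{\repres{\cellA}} + \norm{\query - \repres{\cellA}}$, then bound $\adknn{\repres{\cellA}} \leq (1+\eps/4)(\kdist + \norm{\query - \repres{\cellA}})$ using \obsref{lipschitz}, and finally substitute $\norm{\query - \repres{\cellA}} \leq \eps \kdist/4$. A short arithmetic calculation (valid for $\eps \in (0,1)$) then yields $\dist{\query}{\ballA} \leq (1+\eps)\kdist$.

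\textbf{Step 3 (lower bound).} Symmetrically, I would use $\dist{\query}{\ballA} \geq \adknn{\repres{\cellA}} - \norm{\query - \repres{\cellA}} \geq \distPk{\BallSetA}{\repres{\cellA}}{k} - \norm{\query - \repres{\cellA}}$, and then apply \obsref{lipschitz} again to get $\distPk{\BallSetA}{\repres{\cellA}}{k} \geq \kdist - \norm{\query - \repres{\cellA}}$. Combining, $\dist{\query}{\ballA} \geq \kdist - 2\norm{\query - \repres{\cellA}} \geq (1-\eps/2)\kdist \geq (1-\eps)\kdist$.

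There is no real obstacle here; the argument is essentially two triangle inequalities chained together, and the only item to double-check is that the $(1+\eps/4)$ slack in the stored value $\adknn{\repres{\cellA}}$ together with the additive errors $\leq \eps \kdist /4$ combine to give a clean $(1\pm\eps)$ bound. The assumption $\diameter{\cellA} \leq \eps \kdist /4$ is chosen precisely so both bounds close up.
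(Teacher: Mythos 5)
Your proposal is correct and follows essentially the same route as the paper: transfer $\distPk{\BallSetA}{\cdot}{k}$ between $\query$ and $\repres{\cellA}$ via \obsref{lipschitz}, use the stored $(1+\eps/4)$ guarantee on $\adknn{\repres{\cellA}}$, and close both bounds with the triangle inequality, with the same $13\eps/16$-type arithmetic. Your Step~3 just spells out the lower bound that the paper dismisses with ``similarly.''
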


\begin{proof}
    For the point $\repres{\cellA}$, by \obsref{lipschitz}, we have
    that
    \begin{align*}
        \distPk{\BallSetA}{\repres{\cellA}}{k}%
        \leq%
        \distPk{\BallSetA}{\query}{k} + \norm{\query -
           \repres{\cellA}}%
        \leq%
        \distPk{\BallSetA}{\query}{k} + \diameter{\cellA}%
        \leq%
        (1+\eps/4) \distPk{\BallSetA}{\query}{k} .
    \end{align*}
    Therefore, the ball
    $\kbannrepX{\repres{\cellA}}$ satisfies
    \begin{align*}
        \dist{\repres{\cellA}}{\kbannrepX{\repres{\cellA}}}%
        \leq%
        (1+\eps/4)\distPk{\BallSetA}{\repres{\cellA}}{k}%
        \leq%
        (1+\eps/4)^2 \distPk{\BallSetA}{\query}{k}%
        \leq%
        (1+3\eps/4) \distPk{\BallSetA}{\query}{k}.
    \end{align*}
    As such we have that
    \begin{align*}
        \dist{\query}{\kbannrepX{\repres{\cellA}}}%
        \leq%
        \dist{\repres{\cellA}}{\kbannrepX{\repres{\cellA}}} +
        \norm{\query - \repres{\cellA}}%
        \leq%
        \pth{(1+3\eps/4) + \eps/4}\distPk{\BallSetA}{\query}{k}%
        \leq%
        (1+\eps) \distPk{\BallSetA}{\query}{k}.
    \end{align*}
    
    Similarly, using the Lipschitz property, we can argue that,
    $\dist{\query}{\kbannrepX{\repres{\cellA}}} \geq (1-\eps)
    \distPk{\BallSetA}{\query}{k}$, and therefore we have, $(1-\eps)
    \distPk{\BallSetA}{\query}{k} \leq
    \dist{\query}{\kbannrepX{\repres{\cellA}}} \leq (1+\eps)
    \distPk{\BallSetA}{\query}{k}$, and the required guarantees are
    satisfied.
\end{proof}

\begin{lemma}
    \lemlab{anchor:cluster:exist}
    For any point $\query \in \Re^d$ there is a quorum ball $\ballC_i
    = \ball{\ctrC_i}{\radC_i}$ such that
    \begin{inparaenum}[(A)]
        \item $\ballC_i$ intersects
        $\ball{\query}{\distPk{\BallSetA}{\query}{k}}$,
        \item $\radC_i \leq 3 \constC \distPk{\BallSetA}{\query}{k}$,
        and
        \item $\norm{\query - \ctrC_i} \leq 4 \constC
        \distPk{\BallSetA}{\query}{k}$.
    \end{inparaenum}
\end{lemma}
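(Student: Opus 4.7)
The plan is to identify a specific quorum ball $\ballC_{i^*}$ that must exist by the way the quorum clustering extracts balls, then derive all three conditions from its defining properties. Write $r = \distPk{\BallSetA}{\query}{k}$ for brevity, and let $B_1,\ldots,B_k$ be the $k$ nearest balls of $\BallSetA$ to $\query$ (so each intersects $\ball{\query}{r}$).

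First I would use \lemref{int:large} on $\ball{\query}{r}$ to conclude that at least $k-\cDim$ of $B_1,\ldots,B_k$ have radius at most $r$, with centers lying in $\ball{\query}{2r}$. Each such small ball is therefore \emph{completely contained} in $\ball{\query}{3r}$. Thus the ball $\ball{\query}{3r}$ (i) completely covers at least $k-\cDim$ of the $B_j$'s, and (ii) intersects at least $k$ balls of $\BallSetA$ (since it contains $\ball{\query}{r}$).

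Now define $i^*$ to be the smallest index $i$ such that $\BallSetA_i$ contains at least one of $B_1,\ldots,B_k$; this is well defined since every input ball eventually gets assigned. By minimality of $i^*$, none of $B_1,\ldots,B_k$ was removed before step $i^*$, so all of them still belong to $\BallSetX_{i^*}$. In particular, the $k-\cDim$ small balls among them lie in $\BallSetX_{i^*}$ and are completely contained in $\ball{\query}{3r}$. Hence $\ball{\query}{3r}$ is a valid candidate for the quorum ball at step $i^*$ (it satisfies the two defining coverage/intersection requirements). By the $\constC$-approximation guarantee, $\radC_{i^*} \leq 3\constC r$, which gives condition (B). By the choice of $i^*$, some $B_{j_0}$ lies in $\BallSetA_{i^*}$, so it is completely contained in $\ballC_{i^*}$; since $B_{j_0}$ also intersects $\ball{\query}{r}$, the balls $\ballC_{i^*}$ and $\ball{\query}{r}$ intersect, giving condition (A). Finally, by the triangle inequality and (A), (B),
\begin{align*}
    \norm{\query - \ctrC_{i^*}} \leq r + \radC_{i^*} \leq (1+3\constC)r \leq 4\constC r,
\end{align*}
using $\constC \geq 1$ (which is safe since we established $\constC \leq 12$ with constants of order unity), giving condition (C).

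The main obstacle is step three: one must argue carefully why $i^*$ is a step at which the quorum clustering is \emph{forced} to have a small radius. The key observation is that at step $i^*$ none of the $k$ nearest balls has been extracted yet, so $\BallSetX_{i^*}$ still contains the $k-\cDim$ small nearby balls that witness $\ball{\query}{3r}$ as a feasible candidate for $\ballC_{i^*}$. Everything else is a short computation from the defining properties of approximate quorum clustering and the triangle inequality.
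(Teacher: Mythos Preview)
Your proof is correct and follows essentially the same approach as the paper's: identify the first quorum step at which one of the $k$ nearest balls is extracted, use \lemref{int:large} to show $\ball{\query}{3r}$ is a feasible candidate at that step (giving the radius bound (B)), derive (A) from the fact that the extracted ball intersects $\ball{\query}{r}$, and get (C) from the triangle inequality with $\constC \geq 1$. The only cosmetic difference is that the paper picks the minimum index at which one of the $k-\cDim$ \emph{small} balls is assigned, whereas you use all $k$ nearest balls; your choice of $i^*$ may be smaller, but the argument goes through unchanged since minimality still guarantees all $k-\cDim$ small balls lie in $\BallSetX_{i^*}$.
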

\begin{proof}
    By assumption, $k > 2 \cDim$, and so by \lemref{int:large} among
    the $k$ nearest neighbor of $\query$, there are $k - \cDim$ balls
    of radius at most $\distPk{\BallSetA}{\query}{k}$.  Let
    $\BallSetA'$ denote the set of these balls.  Among the indices
    $1,\ldots, \nclust$, let $i$ be the minimum index such that one of
    these $k-\cDim$ balls is completely covered by the quorum cluster
    $\ballC_i = \ball{\ctrC_i}{\radC_i}$.  Since
    $\ball{\query}{\distPk{\BallSetA}{\query}{k}}$ intersects the ball
    while $\ballC_i$ completely contains it, clearly $\ballC_i$
    intersects $\ball{\query}{\distPk{\BallSetA}{\query}{k}}$.  Now
    consider the time $\ballC_i$ was constructed, i.e, the $i$\th
    iteration of the quorum clustering algorithm. At this time, by
    assumption, all of $\BallSetA'$ was available, i.e., none of its
    balls were assigned to earlier quorum clusters. The ball
    $\ball{\query}{3 \distPk{\BallSetA}{\query}{k}}$ contains $k -
    \cDim$ unused balls and touches $k$ balls from $\BallSetA$, as
    such the smallest such ball had radius at most $3
    \distPk{\BallSetA}{\query}{k}$. By the guarantee on quorum
    clustering, $\radC_i \leq 3 \constC
    \distPk{\BallSetA}{\query}{k}$.  As for the last part, as the
    balls $\ball{\query}{\distPk{\BallSetA}{\query}{k}}$ and $\ballC_i
    = \ball{\ctrC_i}{\radC_i}$ intersect and $\radC_i \leq 3 \constC
    \distPk{\BallSetA}{\query}{k}$, we have by the triangle inequality
    that $\norm{\query - \ctrC_i} \leq (1 + 3 \constC)
    \distPk{\BallSetA}{\query}{k} \leq 4 \constC
    \distPk{\BallSetA}{\query}{k}$, as $\constC \geq 1$.
\end{proof}

\begin{definition}
    For a given query point, any quorum cluster that satisfies the
    conditions of \lemref{anchor:cluster:exist} is defined to be an
    \emphi{anchor cluster}. By \lemref{anchor:cluster:exist} an anchor
    cluster always exists.
\end{definition}

\begin{lemma}
    \lemlab{q:cluster:right:size}
    Suppose that among the quorum cluster balls $\ballC_1, \ldots,
    \ballC_\nclust$, there is some ball $\ballC_i =
    \ball{\ctrC_i}{\radC_i}$ which satisfies that $\norm{\query -
       \ctrC_i} \leq 8 \constC \distPk{\BallSetA}{\query}{k}$ and
    $\eps \distPk{\BallSetA}{\query}{k} / 4 \leq \radC_i \leq 8
    \constC \distPk{\BallSetA}{\query}{k}$ then the output of the
    algorithm is correct.
\end{lemma}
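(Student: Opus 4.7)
The plan is to leverage the hypothesized quorum cluster $\ballC_i$ to show that the leaf cell $\cellA$ of $\CellSetAVD$ containing $\query$ is small enough that the algorithm enters the ``small-cell'' branch and returns $\kbannrepX{\repres{\cellA}}$, and then invoke \lemref{small:cell} to certify that this is a valid $(1\pm\eps)$-approximate $k$\th-nearest neighbor.

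First, I would locate $\query$ within one of the exponential grids $\GridApproxX{\ball{\ctrC_i}{2^j \radC_i}}{\eps/\constA}$ added to $\CellSetA$ around $\ballC_i$ by \Eqref{clusters:around:q}. Combining the two hypotheses gives $\norm{\query-\ctrC_i} \leq 8\constC \distPk{\BallSetA}{\query}{k} \leq (32\constC/\eps)\radC_i$, where the second step uses the lower bound $\radC_i \geq \eps \distPk{\BallSetA}{\query}{k}/4$. Hence the smallest index $j$ with $\query \in \ball{\ctrC_i}{2^j \radC_i}$ lies in the range $\{0, \ldots, \ceiling{\log(32\constC/\eps)}\}$ actually used in the construction. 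By the minimality of $j$, one has $2^j \radC_i \leq \max(\radC_i, 2\norm{\query-\ctrC_i}) \leq 16\constC \distPk{\BallSetA}{\query}{k}$, where the bound on $\radC_i$ comes from the upper-bound hypothesis $\radC_i \leq 8\constC \distPk{\BallSetA}{\query}{k}$.

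Next, \defref{grid:approx} guarantees that every cell of $\GridApproxX{\ball{\ctrC_i}{2^j \radC_i}}{\eps/\constA}$ has diameter at most $(2\eps/\constA) \cdot 2^j \radC_i$. Plugging in the bound above and using $\constA = 256\constC$, this is at most $(\eps/8)\distPk{\BallSetA}{\query}{k}$. Since $\CellSetAVD$ is the overlay of $\CellSetA$ and $\CellSetC$, the leaf cell $\cellA$ containing $\query$ is contained in this cell, so $\diameter{\cellA} \leq (\eps/8)\distPk{\BallSetA}{\query}{k}$. Combining with $\distPk{\BallSetA}{\query}{k} \leq \estX$ from \lemref{triv:ub} yields $\diameter{\cellA} \leq (\eps/8)\estX$, so the algorithm takes the small-cell branch and returns $\kbannrepX{\repres{\cellA}}$. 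Since also $\diameter{\cellA} \leq (\eps/4)\distPk{\BallSetA}{\query}{k}$, \lemref{small:cell} certifies that the returned ball is a valid $(1\pm\eps)$-approximate $k$\th-nearest neighbor.

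The main obstacle is calibrating the two construction parameters simultaneously: the range of $j$ in \Eqref{clusters:around:q} must go up to $\ceiling{\log(32\constC/\eps)}$ so that some $\ball{\ctrC_i}{2^j \radC_i}$ actually covers $\query$, and the refinement parameter $\constA$ must be large enough (here $\constA = 256\constC$ suffices) so that the resulting cell diameter is bounded by $(\eps/8)\distPk{\BallSetA}{\query}{k}$, which is tight enough both to trigger the small-cell branch and to satisfy the hypothesis of \lemref{small:cell}.
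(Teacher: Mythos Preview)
Your proposal is correct and follows essentially the same approach as the paper's proof: both use the lower bound on $\radC_i$ to place $\query$ within the exponential environ of \Eqref{clusters:around:q}, bound $2^j\radC_i \leq \max(\radC_i, 2\norm{\query-\ctrC_i})$ for the minimal $j$, derive $\diameter{\cellA} \leq (\eps/8)\distPk{\BallSetA}{\query}{k}$ via $\constA = 256\constC$, and then invoke \lemref{triv:ub} and \lemref{small:cell}. Your write-up is in fact slightly more explicit than the paper's in justifying why the leaf of the overlay $\CellSetAVD$ inherits the diameter bound and in checking that the $(\eps/4)$ hypothesis of \lemref{small:cell} is met.
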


\begin{proof}
    We have
    \[
    \frac{32 \constC \radC_i}{\eps} \geq \frac{32 \constC \pth{\eps
          \distPk{\BallSetA}{\query}{k}/4}}{\eps} = 8 \constC
    \distPk{\BallSetA}{\query}{k} \geq \norm{\query - \ctrC_i}.
    \]
    Thus, by construction, the expanded environ of the quorum cluster
    $\ball{\ctrC_i}{\radC_i}$ contains the query point, see
    \Eqrefpage{clusters:around:q}. Let $j$ be the smallest
    non-negative integer such that $2^j \radC_i \geq
    \dist{\query}{\ctrC_i}$. We have that, $2^j \radC_i \leq \max
    (\radC_i, 2 \dist{\query}{\ctrC_i})$.  As such, if $\cellA$ is the
    smallest cell in $\CellSetAVD$ containing the query point
    $\query$, then
    \begin{align*}
        \diameter{\cellA}%
        &\leq %
        \frac{\eps}{\constA} 2^{j+1} \radC_i%
        \leq %
        \frac{\eps}{\constA} \cdot \max \pth{ 2 \radC_i,4
           \dist{\query}{\ctrC_i} }%
        \leq %
        \frac{\eps}{\constA} \cdot \max \pth{ 16 \constC
           \distPk{\BallSetA}{\query}{k}, 32 \constC
           \distPk{\BallSetA}{\query}{k} \MakeBig } %
        \\
        &\leq%
        \frac{\eps}{8} \distPk{\BallSetA}{\query}{k}, %
    \end{align*}
    by \Eqrefpage{clusters:around:q}, and if $\constA \geq 256
    \constC$.  Now, by \lemref{triv:ub} we have that $\estX \geq
    \distPk{\BallSetA}{\query}{k}$, so $\diameter{\cellA} \leq
    (\eps/8) \estX$. Therefore, the algorithm returns
    $\kbannrepX{\repres{\cellA}}$ as the $(1 \pm \eps)$-approximate
    $k$\th-nearest neighbor, but then by \lemref{small:cell} it is a
    correct answer.
\end{proof}

\begin{lemma}
    \lemlab{corr:final}%
    The query algorithm always outputs a correct approximate answer,
    i.e., the output ball $\ballA$ satisfies $(1-\eps)
    \distPk{\BallSetA}{\query}{k} \leq \dist{\query}{\ballA} \leq
    (1+\eps) \distPk{\BallSetA}{\query}{k} $.
\end{lemma}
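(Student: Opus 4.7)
The strategy is to split on the branch taken by the algorithm: in branch (i), $\diameter{\cellA} \leq (\eps/8)\estX$ and we return $\kbannrepX{\repres{\cellA}}$; in branch (ii), $\diameter{\cellA} > (\eps/8)\estX$ and we return $\ballrepX{\cellA}$. Branch (i) will reduce to \lemref{small:cell}, while branch (ii) will use that we must be in a small-cluster regime in which the returned ball sits inside a small quorum cluster near $\query$.

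For branch (i), the plan is to upper-bound $\estX$ in terms of $\distPk{\BallSetA}{\query}{k}$ and $\diameter{\cellA}$. Using $\estX \leq \adknn{\repres{\cellA}} + \norm{\query - \repres{\cellA}}$, the $(1+\eps/4)$-guarantee on $\adknn{\repres{\cellA}}$, and the Lipschitz property \obsref{lipschitz}, one obtains $\estX \leq (1+\eps/4)\distPk{\BallSetA}{\query}{k} + O(\diameter{\cellA})$. Plugging this into $\diameter{\cellA} \leq (\eps/8)\estX$ and solving yields $\diameter{\cellA} \leq (\eps/4)\distPk{\BallSetA}{\query}{k}$ for $\eps \in (0,1]$, so \lemref{small:cell} finishes this branch.

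For branch (ii), I would first argue that no ``right-sized'' quorum cluster in the sense of \lemref{q:cluster:right:size} exists: the proof of that lemma establishes $\diameter{\cellA} \leq (\eps/8)\distPk{\BallSetA}{\query}{k}$ whenever such a cluster is present, and \lemref{triv:ub} then gives $\diameter{\cellA} \leq (\eps/8)\estX$, putting us in branch (i). Consequently, the anchor cluster from \lemref{anchor:cluster:exist} satisfies the distance condition for being right-sized but must fail the radius condition, so $\radC_a < (\eps/4)\distPk{\BallSetA}{\query}{k}$. Combining this with ``$\ballC_a$ intersects $\ball{\query}{\distPk{\BallSetA}{\query}{k}}$'' gives $\normP{\mapped{\query} - \mapped{\ballC_a}} \leq (1+\eps/2)\distPk{\BallSetA}{\query}{k}$, and the AVD $(1+\eps/8)$-guarantee then yields $\normP{\mapped{\query} - \repX{\cellA}} \leq (1+\eps)\distPk{\BallSetA}{\query}{k}$.

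Writing $\repX{\cellA} = \mapped{\ballC_{i^*}}$ with $\ballC_{i^*} = \ball{\ctrC_{i^*}}{\radC_{i^*}}$, the bound above gives $\norm{\query - \ctrC_{i^*}} \leq 2\distPk{\BallSetA}{\query}{k}$, so $\ballC_{i^*}$ also satisfies the distance condition, and the same ``no right-sized cluster'' argument forces $\radC_{i^*} < (\eps/4)\distPk{\BallSetA}{\query}{k}$. The upper bound $\dist{\query}{\ballrepX{\cellA}} \leq (1+\eps)\distPk{\BallSetA}{\query}{k}$ then follows from $\ballrepX{\cellA} \subseteq \ballC_{i^*}$ together with $\dist{\query}{\ballA'} \leq \normP{\mapped{\query} - \mapped{\ballA'}}$ and monotonicity of $\normP{\mapped{\query} - \mapped{\cdot}}$ under ball containment. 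For the lower bound, \obsref{ball:contained} gives $\dist{\query}{\ballrepX{\cellA}} \geq \dist{\query}{\ballC_{i^*}}$; using that $\ballC_{i^*}$ intersects at least $k$ input balls we have $\distPk{\BallSetA}{\query}{k} \leq \norm{\query - \ctrC_{i^*}} + \radC_{i^*}$, which together with $\radC_{i^*} < (\eps/4)\distPk{\BallSetA}{\query}{k}$ implies $\dist{\query}{\ballC_{i^*}} \geq (1-\eps/2)\distPk{\BallSetA}{\query}{k}$. The main obstacle is this transfer step in branch (ii): the AVD is built on quorum clusters in the lifted $\normP{\cdot}$-norm, and one must convert its approximation guarantee into one about an input ball inside the winning cluster, which only works because the contrapositive of \lemref{q:cluster:right:size} forces both the anchor and the winning cluster to be small.
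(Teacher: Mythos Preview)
Your proof is correct and uses the same ingredients as the paper (\lemref{triv:ub}, \lemref{small:cell}, \lemref{anchor:cluster:exist}, \lemref{q:cluster:right:size}, and the \AVD guarantee), but the organization differs in a useful way.

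The paper splits first on whether a ``right-sized'' quorum cluster exists: if so, \lemref{q:cluster:right:size} shows the algorithm lands in branch (i) and is correct; if not, the paper then handles both algorithmic branches under that assumption, bounding $\estX$ in branch (i) via $\estX \leq \normP{\mapped{\query}-\repX{\cellA}} \leq 2\distPk{\BallSetA}{\query}{k}$. You instead split first on the algorithmic branch. Your branch-(i) argument is a genuine simplification: by using the $\adknn{\repres{\cellA}}+\norm{\query-\repres{\cellA}}$ term of $\estX$ and the Lipschitz property, you obtain $\diameter{\cellA}\leq(\eps/4)\distPk{\BallSetA}{\query}{k}$ by a direct self-referential inequality, without ever touching the anchor cluster, the \AVD, or the right-sized-cluster dichotomy. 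For branch (ii) you then recover the paper's setup by taking the contrapositive of \lemref{q:cluster:right:size}, and from there your argument coincides with the paper's. One small difference in the lower bound: you justify $\norm{\query-\ctrC_{i^*}}+\radC_{i^*}\geq\distPk{\BallSetA}{\query}{k}$ from the quorum property that $\ballC_{i^*}$ intersects at least $k$ balls of $\BallSetA$, whereas the paper obtains the same inequality from \lemref{triv:ub} (since $\norm{\query-\ctrC_{i^*}}+\radC_{i^*}=\normP{\mapped{\query}-\repX{\cellA}}\geq\estX\geq\distPk{\BallSetA}{\query}{k}$). Both are valid; yours is slightly more intrinsic, the paper's avoids re-deriving a fact already encoded in $\estX$.
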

\begin{proof}
    Suppose that among the quorum cluster balls $\ballC_1 =
    \ball{\ctrC_1}{\radC_1}, \ldots, \ballC_\nclust =
    \ball{\ctrC_\nclust}{\radC_\nclust}$, there is some ball
    $\ballC_i$ such that $\norm{\query - \ctrC_i} \leq 8 \constC
    \distPk{\BallSetA}{\query}{k}$ and $(\eps/4)
    \distPk{\BallSetA}{\query}{k} \leq \radC_i \leq 8 \constC
    \distPk{\BallSetA}{\query}{k}$, then by
    \lemref{q:cluster:right:size} the algorithm returns a valid
    approximate answer. Assume this condition is not satisfied. Let
    the anchor cluster be $\ballC = \ball{\ctrC}{\radC}$.  Since the
    anchor cluster satisfies $\norm{\query - \ctrC} \leq 4 \constC
    \distPk{\BallSetA}{\query}{k}$ and $\radC \leq 3 \constC
    \distPk{\BallSetA}{\query}{k}$, it must be the case that, $\radC <
    (\eps/4) \distPk{\BallSetA}{\query}{k}$. Since the anchor cluster
    intersects $\ball{\query}{\distPk{\BallSetA}{\query}{k}}$, we have
    that $\norm{\query - \ctrC} \leq (1+\eps/4)
    \distPk{\BallSetA}{\query}{k}$.  Thus, $\normP{\mapped{\query} -
       \mapped{\ballC}} = \norm{\query - \ctrC} + \radC \leq
    (1+\eps/2) \distPk{\BallSetA}{\query}{k}$.  Let $\cellA$ be the
    smallest cell in which $\query$ is located. Now consider the point
    $\repX{\cellA} \in \mapped{\QClustBalls}$.  Suppose it corresponds
    to the cluster $\ballC_j$, i.e., $\mapped{\ballC_j} =
    \repX{\cellA}$. Since $\repX{\cellA}$ is a $(1+\eps/8)$-\ANN to
    $\query$ among the points of $\mapped{\QClustBalls}$,
    $\normP{\mapped{\query} - \repX{\cellA}} \leq (1+\eps/8)
    \normP{\mapped{\query} - \mapped{\ballC}} \leq
    (1+\eps/8)(1+\eps/2)\distPk{\BallSetA}{\query}{k} \leq (1+\eps)
    \distPk{\BallSetA}{\query}{k} \leq 2 \distPk{\BallSetA}{\query}{k}
    \leq 8 \constC \distPk{\BallSetA}{\query}{k} $. It follows that,
    $\norm{\query - \ctrC_j} \leq 8 \constC
    \distPk{\BallSetA}{\query}{k}$, and $\radC_j \leq 8 \constC
    \distPk{\BallSetA}{\query}{k}$.  By our assumption, it must be the
    case that, $\radC_j < (\eps/4) \distPk{\BallSetA}{\query}{k}$.
    Now, there are two cases. Suppose that, $\diameter{\cellA} \leq
    (\eps/8) \estX$. Then, since we have $\estX \leq
    \normP{\mapped{\query} - \repX{\cellA}}$ so $\estX \leq 2
    \distPk{\BallSetA}{\query}{k}$. As such, $\diameter{\cellA} \leq
    (\eps/4) \distPk{\BallSetA}{\query}{k}$.  In this case we return
    $\kbannrepX{\cellA}$ by the algorithm, but the result is correct
    by \lemref{small:cell}. On the other hand, if we return
    $\ballrepX{\cellA}$, it is easy to see that
    $\dist{\query}{\ballrepX{\cellA}} \leq \norm{\query - \ctrC_j} +
    \radC_j \leq (1+\eps) \distPk{\BallSetA}{\query}{k} $. Also, as
    $\ballrepX{\cellA}$ lies completely inside $\ballC_j$ it follows
    by \obsref{ball:contained}, that $\dist{\query}{\ballrepX{\cellA}}
    \geq \dist{\query}{\ballC_j} \geq \norm{\query - \ctrC_j} -
    \radC_j \geq (\norm{\query - \ctrC_j} + \radC_j) - 2\radC_j \geq
    \distPk{\BallSetA}{\query}{k} - (\eps/2)
    \distPk{\BallSetA}{\query}{k} \geq
    (1-\eps/2)\distPk{\BallSetA}{\query}{k}$, where the second last
    inequality follows by \lemref{triv:ub}.
\end{proof}

\subsection{The result}

\begin{theorem}%
    \thmlab{avd:main}%
    Given a set $\BallSetA$ of $n$ disjoint balls in $\Re^d$, a number
    $k$, with $1 \leq k \leq n$, and $\eps \in (0,1)$, one can
    preprocess $\BallSetA$, in $\ds O \pth{ n \log n + \frac{n}{k }
       C_\eps \log n + \frac{n}{k } C_\eps'}$ time, where $C_\eps =
    O\pth{ \eps^{-d}\log {\eps}^{-1} }$ and $C_\eps' = O\pth{
       \eps^{-2d}\log {\eps}^{-1} }$. The space used by the
    data structure is $O( C_\eps n/k)$. Given a query point $\query$,
    this data structure outputs a ball $\ballA \in \BallSetA$ in $\ds
    O\pth[]{\log \frac{n}{k \eps }}$ time, such that $(1-\eps)
    \distPk{\BallSetA}{\query}{k} \leq \dist{\query}{\ballA} \leq
    (1+\eps)\distPk{\BallSetA}{\query}{k}$.
\end{theorem}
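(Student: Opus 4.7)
The plan is to decompose the theorem into the four standard pieces — correctness, space, preprocessing time, and query time — and then stitch them together. Correctness is essentially already done: by \lemref{corr:final}, for any query point $\query$ the cell $\cellA \in \CellSetAVD$ containing $\query$ yields, via the test $\diameter{\cellA} \leq (\eps/8) \estX$, either $\kbannrepX{\repres{\cellA}}$ (handled by \lemref{small:cell}) or $\ballrepX{\cellA}$ (handled directly inside \lemref{corr:final}). So I would just invoke \lemref{corr:final} and move on to the resource bounds.

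For the space bound I would count the two collections making up $\CellSetAVD$ separately. The exponential grid collection $\CellSetA$ from \Eqref{clusters:around:q} contains, per quorum ball, $O(\log(1/\eps))$ scales and $O(\eps^{-d})$ cells per scale, so over all $\nclust = O(n/k)$ quorum balls it has $O\pth{(n/k) \eps^{-d} \log(1/\eps)} = O((n/k) C_\eps)$ cells. The Arya--Malamatos AVD on the $\nclust$ lifted points in $\Re^{d+1}$ with approximation $\eps/8$ contributes $O\pth{\nclust/\eps^{d+1}} = O((n/k) C_\eps)$ cells after absorbing constants; after clipping to $x_{d+1}=0$ the count can only decrease. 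Overlaying two compressed quadtrees of total size $O((n/k) C_\eps)$ produces a compressed quadtree of the same asymptotic size, which is the claimed space bound.

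For preprocessing time I would add the contributions step by step: $O(n \log n)$ for building \dsref{everything} and running the quorum clustering of \lemref{q:clustering}; $O((n/k) C_\eps)$ for the Arya--Malamatos AVD on $\nclust$ points with the $\normP{\cdot}$ norm and for the quadtree overlay; and then, for each of the $O((n/k) C_\eps)$ cells in $\CellSetAVD$, one call to the $(k,\eps/4)$-ANN structure of \thmref{k:at:query:main} to obtain $\adknn{\repres{\cellA}}$ and $\kbannrepX{\repres{\cellA}}$. Each such call costs $O(\log n + \eps^{-d})$, giving a total of $O((n/k) C_\eps \log n + (n/k) C_\eps \eps^{-d})$, which matches the bound with $C_\eps' = O(\eps^{-2d} \log \eps^{-1})$.

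For the query I would note that a point-location query in a compressed quadtree with $O((n/k) C_\eps)$ nodes takes $O(\log((n/k) C_\eps)) = O(\log(n/(k\eps)))$ time, after which \Eqref{in:cell} and the diameter test are $O(1)$ operations using the precomputed fields $\repres{\cellA}$, $\repX{\cellA}$, $\adknn{\repres{\cellA}}$, $\kbannrepX{\repres{\cellA}}$, $\ballrepX{\cellA}$. The only slightly delicate point — and the one I would write out most carefully — is justifying that the AVD on the lifted points really costs only $O(\nclust \eps^{-(d+1)})$ space/time and that clipping to $x_{d+1}=0$ preserves the property that the smallest containing $d$-dimensional cube still records a correct $(1+\eps/8)$-ANN for any planar query; everything else reduces to arithmetic on bounds already established.
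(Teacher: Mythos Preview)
Your overall structure mirrors the paper's proof, and the correctness, preprocessing, and query parts are essentially right. There are two gaps, one minor and one real.

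The minor one: the entire construction of \secref{avd} (quorum clustering, the sets $\CellSetA$, $\CellSetC$, etc.) presupposes $k > 2\cDim$. You never dispose of the case $k \leq 2\cDim$; the paper handles it by simply invoking \thmref{k:at:query:main}, whose $O(n)$ space and $O(\log n + \eps^{-d})$ query time are within the stated bounds when $k = O(1)$.

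The real gap is in your space bound for $\CellSetC$. You write that the Arya--Malamatos \AVD{} in $\Re^{d+1}$ has $O(\nclust/\eps^{d+1})$ cells and then assert $O(\nclust/\eps^{d+1}) = O((n/k)C_\eps)$. That equality is false: $C_\eps = O(\eps^{-d}\log\eps^{-1})$, so $(n/k)C_\eps$ is smaller than $(n/k)\eps^{-(d+1)}$ by a factor of roughly $1/(\eps\log(1/\eps))$. Saying that clipping to $x_{d+1}=0$ ``can only decrease'' the count is not enough; you need that clipping actually drops the exponent by one. The paper's argument is exactly this: because queries lie in the hyperplane $x_{d+1}=0$, only those $(d{+}1)$-dimensional cubes with a face on that hyperplane are retained, and those form a $d$-dimensional decomposition whose size is $O((\nclust/\eps^{d})\log(1/\eps)) = O((n/k)C_\eps)$. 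Without this observation your space bound is off by a $1/\eps$ factor, and the preprocessing and query bounds inherit the same defect. Otherwise your argument matches the paper's.
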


\begin{proof}
    If $k \leq 2\cDim$ then \thmref{k:at:query:main} provides the
    desired result.
    For $k > 2 \cDim$, the correctness was proved in 
    \lemref{corr:final}. We only need to
    bound the construction time and space as well as the query time.
    Computing the quorum clustering takes time $O(n \log n)$ by
    \lemref{q:clustering}.  Observe that $\cardin{\CellSetA} =
    O\pth[]{\frac{n}{k\eps^d}\log\frac{1}{\eps}}$. From the
    construction of Arya and Malamatos \cite{am-lsavd-02}, we have
    $\cardin{\CellSetC} = O\pth[]{\frac{n}{k \eps^{d}}\log
       \frac{1}{\eps}}$ (note, that since we clip the construction to
    a hyperplane, we get $1/\eps^d$ in the bound and not
    $1/\eps^{d+1}$). A careful implementation of this stage takes time
    $O\pth{ n \log n + \cardin{\CellSetAVD}\pth{\log n +
          \frac{1}{\eps^{d-1}}}}$. Overlaying the two compressed
    quadtrees representing them takes linear time in their size, that
    is $O\pth{ \cardin{\CellSetA} + \cardin{\CellSetC} }$.
    
    The most expensive step is to perform the $(1 \pm
    \eps/4)$-approximate $k$\th-nearest neighbor query for each cell
    in the resulting decomposition of $\CellSetAVD$, see
    \Eqrefpage{in:cell} (i.e., computing $\adknn{\repres{\cellA}}$ for
    each cell $\cellA \in \CellSetAVD$). Using the data structure of
    \secref{k:at:query} (see \thmref{k:at:query:main}) each query
    takes $O\pth{ \log n + 1/\eps^{d}}$ time.
    \begin{align*}
        O\pth{ n \log n + \cardin{\CellSetAVD}\pth{\log n +
              \frac{1}{\eps^{d}}}}%
        =%
        O \pth{%
           n \log n + %
           \frac{n}{k \eps^{d}}\log \frac{1}{\eps}%
           \log n +%
           \frac{n}{k \eps^{2d}}\log \frac{1}{\eps}%
        }
    \end{align*}
    time, and this bounds the overall construction time.
    
    The query algorithm is a point location query followed by an
    $O(1)$ time computation and takes time $O\pth[]{\log
       \pth{\frac{n}{k \eps}}}$.
\end{proof}

Note that the space decomposition generated by \thmref{avd:main} can
be interpreted as a space decomposition of complexity $O( C_\eps
n/k)$, where every cell has two input balls associated with it, which
are the candidates to be the desired $(k,\eps)$-ANN. That is,
\thmref{avd:main} computes a $(k.\eps)$-\AVD of the input balls.

\section{Conclusions}
\seclab{conclusions}

In this paper, we presented a generalization of the usual $(1 \pm
\eps)$-approximate $k$\th-nearest neighbor problem in $\Re^d$, where
the input are balls of arbitrary radius, while the query is a point.
We first presented a data structure that takes $O(n)$ space, and the
query time is $O(\log n + \eps^{-d})$. Here, both $k$ and $\eps$ could
be supplied at query time. Next we presented an $(k,\eps)$-\AVD taking
$O(n / k)$ space. Thus showing, surprisingly, that the problem
can be solved in sublinear space if $k$ is sufficiently large.





\bibliographystyle{plain}%
\bibliography{shortcuts,geometry}%

\providecommand{\Merigot}{M{\'{}e}rigot}
  \providecommand{\Matousek}{Matou{\v s}ek} \providecommand{\Erdos}{Erd{\H o}s}
  \providecommand{\Barany}{B{\'a}r{\'a}ny}
  \providecommand{\Bronimman}{Br{\"o}nnimann}
  \providecommand{\Gartner}{G{\"a}rtner} \providecommand{\Badoiu}{B\u{a}doiu}
  \providecommand{\tildegen}{{\protect\raisebox{-0.1cm}
  {\symbol{'176}\hspace{-0.03cm}}}} \providecommand{\SarielWWWPapersAddr}
  {http://www.uiuc.edu/~sariel/papers} \providecommand{\SarielWWWPapers}
  {http://www.uiuc.edu/\tildegen{}sariel/\hspace{0pt}papers}
  \providecommand{\urlSarielPaper}[1]{ \href{\SarielWWWPapersAddr/#1}
  {\SarielWWWPapers{}/#1}}
\begin{thebibliography}{10}

\bibitem{ai-nohaa-08}
A.~Andoni and P.~Indyk.
\newblock Near-optimal hashing algorithms for approximate nearest neighbor in
  high dimensions.
\newblock {\em Commun. ACM}, 51(1):117--122, 2008.

\bibitem{am-lsavd-02}
S.~Arya and T.~Malamatos.
\newblock Linear-size approximate {Voronoi} diagrams.
\newblock In {\em Proc. 13th ACM-SIAM Sympos. Discrete Algs.}, pages 147--155,
  2002.

\bibitem{amm-sttas-05}
S.~Arya, T.~Malamatos, and D.~M. Mount.
\newblock Space-time tradeoffs for approximate spherical range counting.
\newblock In {\em Proc. 16th ACM-SIAM Sympos. Discrete Algs.}, pages 535--544,
  2005.

\bibitem{amm-sttan-09}
S.~Arya, T.~Malamatos, and D.~M. Mount.
\newblock Space-time tradeoffs for approximate nearest neighbor searching.
\newblock {\em J. Assoc. Comput. Mach.}, 57(1):1--54, 2009.

\bibitem{am-ars-00}
S.~Arya and D.~M. Mount.
\newblock Approximate range searching.
\newblock {\em Comput. Geom. Theory Appl.}, 17:135--152, 2000.

\bibitem{amnsw-oaann-98}
S.~Arya, D.~M. Mount, N.~S. Netanyahu, R.~Silverman, and A.~Y. Wu.
\newblock An optimal algorithm for approximate nearest neighbor searching in
  fixed dimensions.
\newblock {\em J. Assoc. Comput. Mach.}, 45(6):891--923, 1998.

\bibitem{bhst-sqgqi-10}
{M. de} Berg, H.~Haverkort, S.~Thite, and L.~Toma.
\newblock Star-quadtrees and guard-quadtrees: {I/O}-efficient indexes for fat
  triangulations and low-density planar subdivisions.
\newblock {\em Comput. Geom. Theory Appl.}, 43:493--513, July 2010.

\bibitem{ck-dmpsa-95}
P.~B. Callahan and S.~R. Kosaraju.
\newblock A decomposition of multidimensional point sets with applications to
  $k$-nearest-neighbors and $n$-body potential fields.
\newblock {\em J. Assoc. Comput. Mach.}, 42:67--90, 1995.

\bibitem{cdhks-gqsa-05}
P.~Carmi, S.~Dolev, S.~{Har-Peled}, M.~J. Katz, and M.~Segal.
\newblock Geographic quorum systems approximations.
\newblock {\em Algorithmica}, 41(4):233--244, 2005.

\bibitem{c-nnsms-06}
K.~L. Clarkson.
\newblock Nearest-neighbor searching and metric space dimensions.
\newblock In G.~Shakhnarovich, T.~Darrell, and P.~Indyk, editors, {\em
  Nearest-Neighbor Methods for Learning and Vision: Theory and Practice}, pages
  15--59. MIT Press, 2006.

\bibitem{h-rvdnl-01}
S.~{Har-Peled}.
\newblock A replacement for {Voronoi} diagrams of near linear size.
\newblock In {\em Proc. 42nd Annu. IEEE Sympos. Found. Comput. Sci.}, pages
  94--103, 2001.

\bibitem{h-gaa-11}
S.~{Har-Peled}.
\newblock {\em Geometric Approximation Algorithms}, volume 173 of {\em
  Mathematical Surveys and Monographs}.
\newblock Amer. Math. Soc., 2011.

\bibitem{him-anntr-12}
S.~{Har-Peled}, P.~Indyk, and R.~Motwani.
\newblock Approximate nearest neighbors: {Towards} removing the curse of
  dimensionality.
\newblock {\em Theory Comput.}, 8:321--350, 2012.
\newblock Special issue in honor of Rajeev Motwani.

\bibitem{hk-drhrp-12}
S.~{Har-Peled} and N.~Kumar.
\newblock Down the rabbit hole: Robust proximity search in sublinear space.
\newblock In {\em Proc. 53rd Annu. IEEE Sympos. Found. Comput. Sci.}, pages
  430--439, 2012.

\bibitem{hk-amdgp-13}
S.~{Har-Peled} and N.~Kumar.
\newblock Approximating minimization diagrams and generalized proximity search.
\newblock In {\em Proc. 54th Annu. IEEE Sympos. Found. Comput. Sci.}, pages
  717--726, 2013.

\bibitem{im-anntr-98}
P.~Indyk and R.~Motwani.
\newblock Approximate nearest neighbors: {Towards} removing the curse of
  dimensionality.
\newblock In {\em Proc. 30th Annu. ACM Sympos. Theory Comput.}, pages 604--613,
  1998.

\bibitem{sdi-nnmlv-06}
G.~Shakhnarovich, T.~Darrell, and P.~Indyk.
\newblock {\em Nearest-Neighbor Methods in Learning and Vision: Theory and
  Practice (Neural Information Processing)}.
\newblock The MIT Press, 2006.

\end{thebibliography}

\end{document}